\theoremstyle{plain}
\newtheorem{theorem}{Theorem}[section]
\newtheorem{corollary}[theorem]{Corollary}
\theoremstyle{definition}
\newtheorem*{remark}{Remark}
\def\myTheorem{Claim}
\def\mytheorem{claim}
\font\tenshuffle=shuffle10 \font\sevenshuffle=shuffle7 \font\fiveshuffle=shuffle7 at 5pt
\def\shuffle{{%
    \def\Dshuffle{\mathbin{\hbox{\tenshuffle\char'001}}}%
    \def\Sshuffle{\mathbin{\hbox{\sevenshuffle\char'001}}}%
    \def\SSshuffle{\mathbin{\hbox{\fiveshuffle\char'001}}}%
    \mathchoice{\Dshuffle}{\Dshuffle}{\Sshuffle}{\SSshuffle}}}
\newcommand{\subalign}[1]{
    \vcenter{%
    \Let@ \restore@math@cr \default@tag
    \baselineskip\fontdimen10 \scriptfont\tw@
    \advance\baselineskip\fontdimen12 \scriptfont\tw@
    \lineskip\thr@@\fontdimen8 \scriptfont\thr@@
    \lineskiplimit\lineskip
    \ialign{\hfil$\m@th\scriptstyle##$&$\m@th\scriptstyle{}##$\hfil\crcr
    #1\crcr
    }%
    }%
}
\newcommand{\raisemath}[1]{\mathpalette{\raisem@th{#1}}}
\newcommand{\raisem@th}[3]{\raisebox{#1}{$#2#3$}}
\def\scalemath#1{\@ifnextchar[{\scalem@th{#1}}{\scalem@th{#1}[#1]}}
\def\scalem@th#1[#2]{\mathpalette{\scalem@@th{#1}[#2]}}
\def\scalem@@th#1[#2]#3#4{\scalebox{#1}[#2]{$#3#4$}}
\newcommand{\vcentermath}{\mathpalette{\vcenterm@th}}
\newcommand{\vcenterm@th}[2]{\vcenter{\hbox{$#1#2$}}}
\newcommand{\eqtitleref}[1]{\texorpdfstring{\eqref{#1}}{(\protect\ref{#1})}}
\begin{document}

\begin{CJK*}{UTF8}{}
\CJKfamily{gbsn}

\title{Pions from higher-dimensional gluons: general realizations and stringy models}
\author[a,b]{Jin Dong (董晋),}
\author[a,b]{Xiang Li (李想)}
\author[a,b,c]{and Fan Zhu (朱凡)}

\affiliation[a]{CAS Key Laboratory of Theoretical Physics, Institute of Theoretical Physics, Chinese Academy of Sciences, Beijing 100190, China}
\affiliation[b]{School of Physical Sciences, University of Chinese Academy of Sciences, No.19A Yuquan Road, Beijing 100049, China}
\affiliation[c]{School of Fundamental Physics and Mathematical Sciences, Hangzhou Institute for Advanced Study, UCAS, Hangzhou 310024, China}

\emailAdd{dongjin@itp.ac.cn}
\emailAdd{lixiang@itp.ac.cn}
\emailAdd{zhufan22@mails.ucas.ac.cn}

\abstract{In this paper we revisit the general phenomenon that scattering amplitudes of pions can be obtained from ``dimensional reduction'' of gluons in higher dimensions in a more general context. We show that such ``dimensional reduction'' operations universally turn gluons into pions regardless of details of interactions: under such operations any amplitude that is gauge invariant and contains only local simple poles becomes one that satisfies Adler zero in the soft limit. As two such examples, we show that starting from gluon amplitudes in both superstring and bosonic string theories, the operations produce ``stringy'' completion of pion scattering amplitudes to all orders in $\alpha'$, with leading order given by non-linear sigma model amplitudes. Via Kawai-Lewellen-Tye relations, they give closed-stringy completion for Born-Infeld theory and the special Galileon theory, which are directly related to gravity amplitudes in closed-string theories. We also discuss how they naturally produce stringy models for mixed amplitudes of pions and colored scalars.}

\maketitle

\end{CJK*}

\addtocontents{toc}{\protect\setcounter{tocdepth}{2}}

\numberwithin{equation}{section}

\section{Introduction}
Over the past few decades, enormous progress has been made in new understanding of quantum field theory (QFT) and string theory via the study of the scattering amplitudes. For example, profound relations among scattering amplitudes of gluons, gravitons, and Goldstone particles {\it etc.} have been found, revealing unexpected mathematical structures hidden in these theories. One significant example was found by Kawai-Lewellen-Tye (KLT)~\cite{kawai1986relation} in string theory, unveiling the famous double copy relation between tree-level closed string and open string amplitudes. Modern realization of the double copy in QFT relies on the color-kinematic duality was known as the Bern-Carrasco-Johanssonn (BCJ) relations~\cite{Bern:2008qj,Bern:2010ue} (see~\cite{Bern:2019prr} for a review). Another remarkable foundation is the Cachazo-He-Yuan (CHY) formula~\cite{Cachazo:2013gna,Cachazo:2013iea,Cachazo:2013hca}, which allows us to study a broader range of theories within a unified framework. This includes gauge theories such as Yang-Mills (YM) theory and Yang-Mills-Scalar (YMS) theory, as well as effective field theories (EFTs) such as non-linear sigma model (NLSM), Dirac-Born-Infeld (DBI) theory and special Galileon (sGal) theory.

Based on the CHY formalism, some fascinating relations between gauge theories and EFTs were uncovered in~\cite{Cachazo:2014xea}, revealing that the scattering amplitudes of EFTs are special dimensional reductions (DRs) of gauge theories ones. These relations are further studied in~\cite{Cheung:2017ems,Cheung:2017yef} via fundamental properties of the theories, where the second version of the DRs is derived. In particular, the NLSM amplitudes~\cite{Kampf:2013vha} can be obtained from DRs of Yang-Mills amplitudes, whose two versions of DRs are given by taking the derivative of $A^{\text{YM}}_n$ with respect to Lorentz product of two selected polarization vectors, {\it e.g.} $e_1 \cdot e_2$, then performing replacement~\eqref{eq: DR1and2} with $p_a \cdot p_b$ trivially reduced. The two versions of DRs yield the same result, that is the NLSM amplitude.
\begin{equation} \label{eq: DR1and2}
\begin{aligned}
    \text{DR I}\ \ &:\ e_a \cdot p_b \to 0 ,   \quad e_a \cdot e_b \to -p_a \cdot p_b \\
    \text{DR II}\ &:\ e_a \cdot e_b \to 0 ,  \quad  e_a \cdot p_b \to p_a \cdot p_b
\end{aligned}\quad ,\quad \forall e_a,e_b\notin \{e_1,e_2\}.
\end{equation}

Despite the concision of these relations, their physical significance may raise questions. One may wonder: Are these DRs inherent, or just accidental? Are these relations between gauge theories and EFTs natural, or just contingent upon deliberate selection of the two specific DRs~\eqref{eq: DR1and2} above? In this paper, we will demonstrate the inherent nature of these DR relations between gauge theories and EFTs, by proving that there exists a general class of various DRs~\eqref{eq: DRgeneral-intro} yielding exactly the same result for a gauge invariant object, without any assumption regarding tree-level such as CHY formula and rationality. We further show that the Adler zeros of these resulting EFTs come from locality of the input amplitudes, which reminds us about the intriguing result introduced in~\cite{Arkani-Hamed:2016rak,Rodina:2016jyz} that uniqueness of gauge theory and EFT amplitudes follow from gauge invariance and Adler zeros, respectively.
\begin{equation} \label{eq: DRgeneral-intro}
\hspace{-2em}\text{General DR}\ :\ \;
\begin{aligned}
    &e^{\text{I}}_a\cdot e^{\text{I}}_b \to -p_a \cdot p_b,\ \ 
    e^{\text{I}}_a\cdot e^{\text{II}}_b \to -p_a \cdot p_b,\ \ e^{\text{II}}_a\cdot e^{\text{II}}_b \to 0,\\
    & e^{\text{I}}_a\cdot p_b \to 0,\ \
    e^{\text{II}}_a\cdot p_b \to p_a \cdot p_b ,\ \
    p_a\cdot p_b\to  p_a \cdot p_b.
\end{aligned}
\end{equation}

The generality of the equivalence of DRs inspires us to investigate the EFT amplitudes beyond tree-level, such as their stringy UV completions, which are expected to arise from DRs of the stringy versions of gauge theories.
In this paper, we will mainly focus on the stringy completion of NLSM, for which several different versions of stringy completions have been proposed in~\cite{Carrasco:2016ldy,Carrasco:2016ygv,Bianchi:2020cfc,Arkani-Hamed:2023swr,Arkani-Hamed:2024nhp}. As suggested before, it is natural to identify the DR of ``stringy Yang-Mills'' theory as the stringy NLSM. In order to provide a more concrete manifestation of the equivalence between different DRs, we present a systematic method to establish the equivalence for open superstring via the integral-by-parts (IBP) process~\cite{He:2018pol,He:2019drm}, suggesting that this approach is applicable to any specific string theory like bosonic string. 

We also extend our study to stringy models that concerns the mixed amplitudes of pions and bi-adjoint $\phi^3$ scalars, whose field theory limit has been studied in~\cite{Cachazo:2016njl}. We give a systematic way to compute its low-energy expansion by deriving the ``BCJ numerators'', therefore one can compute the result at any $\alpha^\prime$ order once given the result of Z-integral as computed in~\cite{Mafra:2016mcc}.

At the end of this paper, we extend our IBP-based method to the bosonic string, which is similar to the superstring except that the resulting stringy NLSM may depend on the first derivative $e_i \cdot e_j$ we take beyond leading low energy limit. Besides, it is natural to extend our discussion of the open super and bosonic strings to their closed string version, which immediately shows gravity can be dimensional reduced to Born-Infeld or special Galileon.

This paper is organized as follows: In section~\ref{sec: Dimension reduction: from gluon to pion}, we introduce the general DRs and the corresponding relations between Yang-Mills and NLSM amplitudes. In section~\ref{sec: Adler zero from gauge invariance}, we prove the equivalence of different DRs from gauge invariance, and further the existence of Alder zero after DR from locality. In section~\ref{sec: Pions in open superstrings}, we develop the IBP method to concretely demonstrate the equivalence between DRs for superstring in detail. In section~\ref{sec: Mixed amplitudes and logarithmic forms}, we extend the stringy NLSM to mixed amplitudes of $\phi^3$ and pions scattering and give the logarithmic form. In section~\ref{sec: Pions in open bosonic strings and closed strings}, we apply our IBP method to the bosonic string and apply our result to the closed super and bosonic strings.

\section{Dimension reduction: from gluons to pions}\label{sec: Dimension reduction: from gluon to pion}
Before introducing the general dimensional reductions, let us briefly review the two types of DR relations between tree-level scattering amplitudes of gluons and pions indicated in~\cite{Cachazo:2014xea,Cheung:2017ems}. To illustrate, one should first note that the amplitude of pure pions scattering and one with only two $\phi^3$ scalars are equal, {\it e.g.}
\begin{equation}
    A^\mathrm{NLSM}_n(1,2,\ldots,n)= A^{\mathrm{NLSM}+ \phi^3}_n(1^\phi,2^\phi,3,\ldots,n),
\end{equation}
where we choose the $\phi^3$ scalars to be $1^\phi,2^\phi$, which can be arbitrarily chosen.
This property can be easily understood in the CHY frame work. It is then convenient to start with the YMS amplitude with two scalars, which can be extracted from a pure Yang-Mills one via a differential operator with respect to the two $\phi^3$ scalars:
\begin{equation}
    A^{\mathrm{YM}+ \phi^3}_n(1^\phi,2^\phi,3,\ldots,n)= \partial_{e_1 \cdot e_2} A^\mathrm{YM}_n(1,2,\ldots,n).
\end{equation}

Using the CHY formula, the authors of~\cite{Cachazo:2014xea} found one can suppose that the polarization vector $e_{a}$ and momentum $p_{a}$ of the $a$-th gluon live in dimension $D=2d$, then wisely choose the components of these $2d$-dimensional Lorentz vectors to obtain the NLSM amplitude from Yang-Mills amplitude (or rather, YMS amplitude, after acting the differential operator): 
\begin{equation}\label{eq: DR1}
\text{DR I}\  :\,
\left\{\,\begin{aligned}
    &e_a^M=(0,i p_a^\mu),\quad p_a^M=(p_a^\mu,0)\,,\\
    &e_a\cdot e_b \to - p_a \cdot p_b,\ \  e_a\cdot p_b \to 0, \ \ p_a\cdot p_b\to  p_a \cdot p_b,
\end{aligned}\right.
\end{equation}
where $i$ is the imaginary unit and we use the indices $M$ and $\mu$ for the $2d$- and $d$-dimensional space respectively. It can be shown in the CHY frame that
\begin{equation}
    A^{\mathrm{YM}+ \phi^3}_n(1^\phi,2^\phi,3,\ldots,n) \xrightarrow[]{\eqref{eq: DR1}} A^{\mathrm{NLSM}+ \phi^3}_n(1^\phi,2^\phi,3,\ldots,n).
\end{equation}

Furthermore, one can choose a different DR yielding the same result:
\begin{equation}\label{eq: DR2}
\text{DR II}\  : \,
\left\{\,\begin{aligned}
    &e_a^M=(p_a^\mu,i p_a^\mu),\quad p_a^M=(p_a^\mu,0)\,,\\
    &e_a\cdot e_b \to 0,\ \  e_a\cdot p_b \to p_a \cdot p_b, \ \ p_a\cdot p_b\to  p_a \cdot p_b.
\end{aligned}\right.
\end{equation}

The authors of~\cite{Cheung:2017ems,Cheung:2017yef} provided an explanation in the Lagrangian level that the DR~\eqref{eq: DR2} gives the NLSM amplitude, the proof using CHY formula is also straightforward. 

In fact, the above reductions transform a general mixed amplitude of gluons and $\phi^3$ scalars scattering into pions and $\phi^3$ scalars scattering:
\begin{equation}
    A^{\mathrm{YM}+ \phi^3}_n(\{\bar{\alpha}\}|\alpha) \xrightarrow[]{\eqref{eq: DR1} \text{ or } \eqref{eq: DR2}} A^{\mathrm{NLSM}+ \phi^3}_n(\{\bar{\alpha}\}|\alpha),
\end{equation}
where the implicit overall ordering could be arbitrarily chosen, {\it e.g.} $(1,2,\ldots,n)$. Then the bi-adjoint $\phi^3$ scalars are labeled by an ordered set $\alpha$, with its unordered complementary set $\{\bar{\alpha}\}$ to represent the gluons (pions). Note that for the above reductions to hold we have assumed $2 \leqslant |\alpha| \leqslant n $.

Interestingly, we find that arbitrary combination versions of \eqref{eq: DR1} and \eqref{eq: DR2} do exactly the same transformation for gauge theories. Specifically, to perform a general DR, we split the $2d$-dimensional polarization vectors into two sets $\text{I}$ and $\text{II}$, the two types of dimensional reduced polarization vectors are defined as
\begin{equation}\label{eq: DRgeneral}
\hspace{-1em}\text{General DR}\ :\,
\left\{\,\begin{aligned}
    &e_a^{\text{I}\; M}=(0,i p_a^\mu),\quad e_b^{\text{II}\; M}=(p_b^\mu,i p_b^\mu)\,,\\[1pt]
    &e^{\text{I}}_a\cdot e^{\text{I}}_b \to -p_a \cdot p_b,\ \ 
    e^{\text{I}}_a\cdot e^{\text{II}}_b \to -p_a \cdot p_b,\ \ e^{\text{II}}_a\cdot e^{\text{II}}_b \to 0,\\
    & e^{\text{I}}_a\cdot p_b \to 0,\ \
    e^{\text{II}}_a\cdot p_b \to p_a \cdot p_b ,\ \
    p_a\cdot p_b\to  p_a \cdot p_b.
\end{aligned}\right.
\end{equation}
where we have assumed $a\in \text{I}$ by writing $e_a^{\text{I}\; M}$ and similar for $e_b^{\text{II}\; M}$. For indices in both I and II we have $p_a^{\text{I}\; M}=p_a^{\text{II}\; M}=(p_a^\mu,0)$. For example, for $n=4$ we have:
\begin{equation}
\begin{array}{ll}
    \text{I}=\varnothing\;,\,\text{II}=\{3,4\} &:\ e_3\cdot e_4\to 0,\ \phantom{-} e_3\cdot p_4,e_4\cdot p_3 \to p_3\cdot p_4\\[3pt]
    \text{I}=\{3\}\;,\,\text{II}=\{4\} &:\ e_3\cdot p_4\to 0,\ -e_3\cdot e_4,e_4\cdot p_3 \to p_3\cdot p_4\\[3pt]
    \text{I}=\{3,4\}\;,\,\text{II}=\varnothing &:\ e_3\cdot e_4\to -p_3\cdot p_4,\ e_3\cdot p_4,e_4\cdot p_3 \to 0\\
\end{array}\mspace{2mu}.
\end{equation}

We discover that all the general DRs~\eqref{eq: DRgeneral} yield the same result for a general mixed amplitude of gluons and $\phi^3$, regardless of how we split the particles into I and II:
\begin{equation}
    A^{\mathrm{YM}+ \phi^3}_n(\{\bar{\alpha}\}|\alpha) \xrightarrow[]{\eqref{eq: DRgeneral} } A^{\mathrm{NLSM}+ \phi^3}_n(\{\bar{\alpha}\}|\alpha).
\end{equation}

In the following parts of this paper, we will investigate more deeply into the equivalence of general DRs as well as the resulting pure or mixed stringy NLSM amplitudes, both with non-constructive method for arbitrary gauge theories and constructive method for specific string Yang-Mills theories like the superstring and the bosonic string.

\newpage

\section{Adler zero from gauge invariance}\label{sec: Adler zero from gauge invariance}
As indicated in the introduction, the equivalence among different types of general DRs convinces us of the inherent nature of the DR relations between gauge theories and EFTs. 
In this section, we will demonstrate (1) the equivalence between different types of general DRs~\eqref{eq: DRgeneral} regarding different split of particles into I and II from gauge invariance, and (2) the existence of Adler zero in DR results from locality, as stated in Claim~\ref{thm: equivalent DR} and~\ref{thm: Adler zero} respectively. Throughout this section, $F_n$ is a function of $e_a\cdot e_b,\, e_a\cdot p_b,\, p_a\cdot p_b$ which is multi-linear in $m$ polarization vectors $e_a$ with $m\leqslant n-2$. The on-shell conditions $p^2_a=0$, transversality conditions $e_a \cdot p_a=0$, $e_a \cdot e_a=0$ and momentum conservation $\sum_{a=1}^n p_a=0$ are assumed to hold.

\begin{\mytheorem}\label{thm: equivalent DR}
For arbitrary partition of polarization vectors into sets $\mathrm{I}\mspace{-1mu}$ and $\mspace{1mu}\mathrm{II}$, dimensional reduction~\eqref{eq: DRgeneral} yields the same $F_n^\mathrm{DR}$ for any gauge-invariant $F_n$.
\end{\mytheorem}

\begin{proof}
To manifest the multi-linear structure of $F_n$, it is convenient to decompose $F_n$ into linear independent blocks according to its dependence on $e_a\cdot e_b$, as has been done in~\cite{Pavao:2022kog}.

\begin{equation}\label{eq: Fn decomposition}
    F_n=\sum_{k=0}^{\left \lfloor m/2 \right \rfloor} \sum_{\rho \in \mathfrak{S}_k} \prod_{(a b) \in \rho} e_a \cdot e_b\; F_n^{\rho},
\end{equation}
where $\mathfrak{S}_k$ denotes the set of all possible partitions of $2k$ among $m$ gauge particles into $k$ pairs, each $\rho \in \mathfrak{S}_k$ is an unordered combination of $k$ disjoint non-diagonal pairs of gauge particles. For example, let $n=6$ and the gauge particles be $\{3,4,5,6\}$, we have
\begin{equation}\nonumber
    \mathfrak{S}_1=\{(34), (35), (36), (45), (46), (56)\},\
    \mathfrak{S}_2=\{(34)(56), (35)(46), (36)(45)\},
\end{equation}
and the decomposition of $F_6$ reads
\begin{equation}\nonumber
\begin{aligned}
	F_6&=F_6^{\varnothing} +e_3\mspace{-1.5mu}\cdot\mspace{-1.5mu} e_4 F_6^{(34)}+e_3\mspace{-1.5mu}\cdot\mspace{-1.5mu} e_5 F_6^{(35)}+e_3\mspace{-1.5mu}\cdot\mspace{-1.5mu} e_6 F_6^{(36)}+e_4\mspace{-1.5mu}\cdot\mspace{-1.5mu} e_5 F_6^{(45)}+e_4\mspace{-1.5mu}\cdot\mspace{-1.5mu} e_6 F_6^{(46)}+e_5\mspace{-1.5mu}\cdot\mspace{-1.5mu} e_6 F_6^{(56)}\\
	&+(e_3\cdot e_4) (e_5\cdot e_6) F_6^{(34)(56)}+(e_3\cdot e_5) (e_4\cdot e_6) F_6^{(35)(46)}+(e_3\cdot e_6) (e_4\cdot e_5) F_6^{(36)(45)}.
\end{aligned}
\end{equation}

Now $F_n^{\rho}$ is a function merely of $e_a\cdot p_b,\, p_b\cdot p_c$ with $a\notin \rho$, and is multi-linear in $e_a\cdot p_b$, which helps us to employ the gauge invariance of $F_n$. Recall that the Ward identity requires that applying $e_j\rightarrow p_j$ on $F_n$ for any gauge particle $j$ yields zero. This implies the following condition for a single gauge particle 
\begin{equation}\label{eq: gauge invariance ei}
    F_n^\rho \big|_{e_j \to p_j}= - \sum_{i\in \bar{\rho},i\neq j} e_i \cdot e_j\; F_n^{\rho\mspace{1mu} \sqcup\mspace{1mu} (ij)}\,\Big|_{e_j\rightarrow p_j}\,,\quad \forall \rho\cap\{j\}=\varnothing.
\end{equation}

For the above example of $F_6$ with $e_6\to p_6$, this just comes from
\begin{equation}\nonumber
\begin{aligned}
	0&=\vcentermath{\scalemath{1}[1.5]{\{}} \big[F_6^{\varnothing}+e_3\cdot p_6 F_{6}^{(36)}+e_4\cdot p_6 F_{6}^{(46)}+e_5\cdot p_6 F_{6}^{(56)} \big] +e_3\cdot e_4 \big[ F_{6}^{(34)}+e_5\cdot p_6 F_{6}^{(34)(56)} \big]\\
	&+e_3\cdot e_5 \big[ F_{6}^{(35)}+e_4\cdot p_6 F_{6}^{(35)(46)} \big] +e_4\cdot e_5 \big[ F_{6}^{(45)}+e_3\cdot p_6 F_{6}^{(36)(45)} \big]\vcentermath{\scalemath{1}[1.5]{\}}}\mspace{1mu}\Big|_{e_6\to p_6},
\end{aligned}
\end{equation}
which implies that all terms in the square brackets must be zero.

However, since DR~\eqref{eq: DRgeneral} involves replacement on not only a single gauge particle, we need to generalize~\eqref{eq: gauge invariance ei} into one with replacement on a nonempty set $\mathfrak{I}$ of gauge particles $e^\mathfrak{I}\rightarrow p^\mathfrak{I}$. Note that in order to get a useful gauge invariance condition with respect to $\mathfrak{I}$ in our proof, we should not naively consider the Ward identity with $e^\mathfrak{I}\rightarrow p^\mathfrak{I}$ acting on $F_n$, but should apply~\eqref{eq: gauge invariance ei} recursively to get:
\begin{equation}\label{eq: gauge invariant eI}
    F_n^\rho \big|_{e^\mathfrak{I} \to p^\mathfrak{I}}=\sum_{k=0}^{\left \lfloor m/2 \right \rfloor} \sum_{\sigma\in(\mathfrak{S}_k|\rho,\mathfrak{I})} (-1)^{k}\prod_{(a b) \in \sigma} e_a \cdot e_b\; F_n^{\rho\mspace{1mu} \sqcup\mspace{1mu} \sigma}\,\Big|_{e^\mathfrak{I} \to p^\mathfrak{I}}\,,\quad \forall \rho\cap\mathfrak{I}=\varnothing,
\end{equation}
where $(\mathfrak{S}_k|\rho,\mathfrak{I})$ denotes the set of all $\sigma\in\mathfrak{S}_k$ such that $\mathfrak{I}\subset\sigma\subset\bar{\rho}$ and each pair in $\sigma$ has nonempty intersection with $\mathfrak{I}$. As a consequence, for each pair $(a b) \in \sigma$, at least one of the polarization vectors in $e_a \cdot e_b$ would be replaced by $e^\mathfrak{I} \to p^\mathfrak{I}$, hence the RHS depends merely on $e_a\cdot p_b,\, p_b\cdot p_c$ and no $e_a \cdot e_b$ will appear.

Before proving~\eqref{eq: gauge invariant eI}, let us illustrate the notation introduced here with two examples. For the above example of $F_6$ with $\mathfrak{I}=\{5,6\}$, we have
\begin{equation}\nonumber
    (\mathfrak{S}_1|\varnothing,\mathfrak{I})=(\mathfrak{S}_1|(34),\mathfrak{I})=\{(56)\},\ (\mathfrak{S}_2|\varnothing,\mathfrak{I})=\{(35)(46),(36)(45)\},\ (\mathfrak{S}_2|(34),\mathfrak{I})=\varnothing.
\end{equation}

Since the $n=6$ example seems a bit unrepresentative, we also give a more typical but incomplete example with $n=8$ and gauge particles being $\{3,4,5,6,7,8\}$, $\mathfrak{I}=\{7,8\}$:
\begin{equation}\nonumber
\begin{aligned}
    &(\mathfrak{S}_2|\varnothing,\mathfrak{I})=\{(37)(48),\dots,(58)(67)\}=\{(i7)(j8),(j7)(i8)|(ij)\in\mathfrak{S}_1,(ij)\cap\mathfrak{I}=\varnothing\},\\
    &(\mathfrak{S}_2|(34),\mathfrak{I})=\{(57)(68),(58)(67)\},\ (\mathfrak{S}_2|(34)(56),\mathfrak{I})=\varnothing,\ \cdots
\end{aligned}
\end{equation}

Gauge invariance condition~\eqref{eq: gauge invariant eI} can be proved by induction on $|\mathfrak{I}|$. For $|\mathfrak{I}|=1$, it is just~\eqref{eq: gauge invariance ei}. Assume~\eqref{eq: gauge invariant eI} holds for $|\mathfrak{I}|=k$, then $\forall\rho\cap(\mathfrak{I}\sqcup\{j\})=\varnothing$ we have
\begin{equation}\nonumber
\begin{aligned}
    F_n^\rho \big|_{e^{\mathfrak{I}\sqcup\{j\}} \to p^{\mathfrak{I}\sqcup\{j\}}}
    &= \sum_{k=0}^{\left \lfloor m/2 \right \rfloor} \sum_{\sigma\in(\mathfrak{S}_k|\rho,\mathfrak{I})} (-1)^{k}\prod_{(a b) \in \sigma} e_a \cdot e_b\; F_n^{\rho\mspace{1mu} \sqcup\mspace{1mu} \sigma}\,\Big|_{e^{\mathfrak{I}\sqcup\{j\}} \to p^{\mathfrak{I}\sqcup\{j\}}}\\
    &= \sum_{k=0}^{\left \lfloor m/2 \right \rfloor} \sum_{
    \begin{smallmatrix*}[c]
    \sigma \in (\mathfrak{S}_k|\rho,\mathfrak{I}) \\
    (ij) \in (\mathfrak{S}_1|\rho,\mathfrak{I}) 
    \end{smallmatrix*}} (-1)^{k+1} \prod_{(a b) \in \sigma\mspace{1mu} \sqcup\mspace{1mu}(ij)} e_a \cdot e_b\; F_n^{\rho\mspace{1mu} \sqcup\mspace{1mu} \sigma\mspace{1mu} \sqcup\mspace{1mu}(ij)}\,\Big|_{e^{\mathfrak{I}\sqcup\{j\}} \to p^{\mathfrak{I}\sqcup\{j\}}}\\
    &= \sum_{k=0}^{\left \lfloor m/2 \right \rfloor} \sum_{\sigma\mspace{1mu} \sqcup\mspace{1mu}(ij)\in (\mathfrak{S}_{k+1}|\rho,\mathfrak{I})} (-1)^{k+1}\prod_{(a b) \in \sigma\mspace{1mu} \sqcup\mspace{1mu}(ij)} e_a \cdot e_b\; F_n^{\rho\mspace{1mu} \sqcup\mspace{1mu} \sigma\mspace{1mu} \sqcup\mspace{1mu}(ij)}\,\Big|_{e^{\mathfrak{I}\sqcup\{j\}} \to p^{\mathfrak{I}\sqcup\{j\}}},
\end{aligned}
\end{equation}
where the second equality comes from~\eqref{eq: gauge invariance ei} and the third equality is due to the fact that $\mathfrak{I}\subset\sigma\sqcup(ij)\subset\bar{\rho}$ if $\mathfrak{I}\subset\sigma\subset\bar{\rho}$ and $\mathfrak{I}\subset(ij)\subset\bar{\rho}$. Hence~\eqref{eq: gauge invariant eI} holds for any $\mathfrak{I}$.

Let us illustrate it with the above $n=6$ example. From~\eqref{eq: gauge invariance ei} we have
\begin{equation}\nonumber
\begin{aligned}
    &F_6^{\varnothing}\big|_{e_6\to p_6}=-e_3\cdot p_6 F_{6}^{(36)}-e_4\cdot p_6 F_{6}^{(46)}-e_5\cdot p_6 F_{6}^{(56)}\Big|_{e_6\to p_6}\,,\\[3pt] 
    &F_{6}^{(36)}\big|_{e_5\to p_5}=-e_4\cdot p_5 F_{6}^{(36)(45)}\Big|_{e_5\to p_5}\,,\ \ F_{6}^{(46)}\big|_{e_5\to p_5}=-e_3\cdot p_5 F_{6}^{(35)(46)}\Big|_{e_5\to p_5}\,. 
\end{aligned}
\end{equation}

Then we perform induction to yield the identity for $\rho=\varnothing$, $\mathfrak{I}=\{5,6\}$. By substituting the above identities of $\mathfrak{I}=\{5\}$ into that of $\mathfrak{I}=\{6\}$, we derive the desired result
\begin{equation}\nonumber
    F_6^{\varnothing}\Big|_{\raisemath{5pt}{\substack{e_5\to p_5\\ e_6\to p_6}}}=-p_5\cdot p_6 F_{6}^{(56)}+(e_3\cdot p_5)(e_4\cdot p_6) F_{6}^{(35)(46)}+(e_3\cdot p_6)(e_4\cdot p_5) F_{6}^{(36)(45)}\Big|_{\raisemath{5pt}{\substack{e_5\to p_5\\ e_6\to p_6}}}\,.
\end{equation}

Now we are just one step away from the conclusion. Let $\rho=\varnothing$, then $\rho\cap\mathfrak{I}=\varnothing$ and $\sigma\subset\bar{\rho}$ are automatically satisfied, so we can abbreviate $(\mathfrak{S}_k|\rho,\mathfrak{I})$ to $(\mathfrak{S}_k|\mathfrak{I})$. By further taking $e_i\to p_i$ for other $i\in\bar{\mathfrak{I}}$ as well, we have
\begin{equation}\label{eq: equivalent DR proof}
    F_n^\varnothing \big|_{e \to p}=\sum_{k=0}^{\left \lfloor m/2 \right \rfloor}  \sum_{\sigma\in (\mathfrak{S}_k|\mathfrak{I})}(-1)^{k}\prod_{(a b) \in \sigma} p_a \cdot p_b\; F_n^{\sigma}\,\Big|_{e \to p}.
\end{equation}

So far we have already completed the proof. Note that the LHS of~\eqref{eq: equivalent DR proof} is nothing but the reduction~\eqref{eq: DR2} acting on $F_n$, or equivalently~\eqref{eq: DRgeneral} with $\mathrm{I}=\varnothing$:
\begin{equation}
   F_n \xrightarrow{\, \eqref{eq: DRgeneral} \text{ with } \mathrm{I}=\varnothing \;} \left. F_n^\varnothing \right|_{e \to p}.
\end{equation}

And the RHS of~\eqref{eq: equivalent DR proof} is exactly DR~\eqref{eq: DRgeneral} with $\mathrm{I}=\mathfrak{I}\neq\varnothing$ acting on $F_n\,$:
\begin{equation}
   F_n \xrightarrow{\, \eqref{eq: DRgeneral} \text{ with } \mathrm{I}=\mathfrak{I} \;}\sum_{k=0}^{\left \lfloor m/2 \right \rfloor}  \sum_{\rho\in (\mathfrak{S}_k|\mathfrak{I})} (-1)^{k}\prod_{(a b) \in \rho} p_a \cdot p_b\; F_n^{\rho}\,\Big|_{e \to p}.
\end{equation}

Hence all the $F_n^\mathrm{DR}$ are equal. This completes the proof.
\end{proof}

\begin{corollary}\label{cor: oddDRvanish}
    For gauge-invariant $F_n$ with odd $m$, DR~\eqref{eq: DRgeneral} yields zero.
\end{corollary}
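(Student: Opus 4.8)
The plan is to reduce everything to a single, convenient choice of split and then exploit the multilinearity of $F_n$. By Claim~\ref{thm: equivalent DR} every general DR~\eqref{eq: DRgeneral} produces the same $F_n^{\mathrm{DR}}$, so it suffices to evaluate the reduction with $\mathrm{I}$ equal to the full set of gauge particles, i.e.\ the original reduction~\eqref{eq: DR1}, for which $e_a\cdot p_b\to 0$ and $e_a\cdot e_b\to -p_a\cdot p_b$.

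First I would feed this choice into the block decomposition~\eqref{eq: Fn decomposition}. For a $k$-matching $\rho\in\mathfrak{S}_k$ the coefficient $F_n^\rho$ is multilinear in the $m-2k$ polarization vectors not appearing in $\rho$, and, having no $e\cdot e$ factor left, each of those vectors can enter $F_n^\rho$ only through a contraction $e_a\cdot p_b$. Under DR~\eqref{eq: DR1} every such contraction is sent to zero, so $F_n^\rho\big|_{\mathrm{DR}}$ vanishes as soon as a single polarization is left unpaired, that is whenever $2k<m$. Only perfect matchings with $2k=m$ can survive the reduction.

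The punchline is then purely combinatorial: a perfect matching of the $m$ polarizations exists only when $m$ is even, so for odd $m$ no term in~\eqref{eq: Fn decomposition} survives and hence $F_n^{\mathrm{DR}}=0$. Equivalently, one can phrase this as a parity statement: flipping $e_a\to -e_a$ for all $m$ polarizations multiplies $F_n$ by $(-1)^m$ by multilinearity, yet leaves $F_n^{\mathrm{DR}}$ invariant because the only dot products this flip affects are the $e_a\cdot p_b$, which all vanish on the kinematics~\eqref{eq: DR1}; this forces $(-1)^m F_n^{\mathrm{DR}}=F_n^{\mathrm{DR}}$ and hence $F_n^{\mathrm{DR}}=0$ for odd $m$.

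I do not expect a genuine obstacle, since the heavy lifting, namely the equivalence of all splits, is already supplied by Claim~\ref{thm: equivalent DR}. The only point that needs a line of care is the assertion that, once the $e\cdot e$ factors are stripped off, each unpaired polarization necessarily contributes an $e\cdot p$ factor; but this is exactly the characterization of $F_n^\rho$ as the multilinear, $e\cdot e$-free coefficient, so it is immediate from the decomposition~\eqref{eq: Fn decomposition}.
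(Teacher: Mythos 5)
Your proposal is correct, but it kills the reduction at the opposite extreme from the paper. The paper's proof takes $\bar{\mathrm{I}}=\{j\}$, i.e.\ $\mathrm{II}$ a single particle: the only terms surviving the reduction, as in \eqref{eq: odd m DR}, are those with $\bar{\rho}=\{j\}$, each proportional to $F_n^\rho\big|_{e_j\to p_j}$, and these vanish by the single-particle Ward identity \eqref{eq: gauge invariance ei}, whose right-hand side is an empty sum because no non-diagonal pair fits inside $\bar{\rho}=\{j\}$. You instead take $\mathrm{II}=\varnothing$, i.e.\ the reduction \eqref{eq: DR1} applied to all $m$ polarizations, where the vanishing is purely combinatorial: since each $F_n^\rho$ in \eqref{eq: Fn decomposition} is multilinear in its unpaired polarizations through $e_a\cdot p_b$ alone, only perfect matchings survive $e_a\cdot p_b\to 0$, and an odd set admits none; your sign-flip argument $e_a\to -e_a$ is an equally valid repackaging of this, though note it works only for this split (for $a\in\mathrm{II}$ the rule $e_a\cdot p_b\to p_a\cdot p_b$ is not flip-invariant). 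Both routes then lean on Claim~\ref{thm: equivalent DR} to cover arbitrary splits --- in fact the two choices are precisely the two sides of \eqref{eq: equivalent DR proof} specialised to $\mathfrak{I}$ equal to the full set of gauge particles, where the sum over $(\mathfrak{S}_k|\mathfrak{I})$ is manifestly empty for odd $m$. What your version buys is a cleaner separation of hypotheses: the vanishing of the $\mathrm{II}=\varnothing$ reduction holds for \emph{any} multilinear $F_n$, gauge-invariant or not, so gauge invariance enters only through the equivalence of splits, whereas the paper invokes the Ward identity once more, directly. Your mechanism also mirrors the paper's own string-level observation $\mathcal{I}(\{1,2,\dots,2n-1\}|\varnothing)=0$ in \eqref{eq: examples of IBP building blocks}.
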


\begin{proof}
Let $\bar{\mathrm{I}}=\{j\}$, then $(\mathfrak{S}_k|\mspace{1mu}\mathrm{I})\neq\varnothing$ only if $k=\frac{m-1}{2}$. While for $\rho\in(\mathfrak{S}_{(m-1)/2}|\mspace{1mu}\mathrm{I})$, one can observe that $\bar{\rho}=\{j\}$ and the only polarization vector that $F_n^\rho$ depends on is $e_j$. This allows us to solely replace $e_j\to p_j$ to obtain the reduce $F_n$ under DR~\eqref{eq: DRgeneral} as follows:
\begin{equation}\label{eq: odd m DR}
   F_n \xrightarrow{\, \eqref{eq: DRgeneral} \text{ with } \bar{\mathrm{I}}=\{j\} \;} (-1)^{(m-1)/2} \sum_{\rho\in (\mathfrak{S}_{(m-1)/2}|\mspace{1mu}\mathrm{I})}\prod_{(a b) \in \rho} p_a \cdot p_b\; F_n^{\rho}\big|_{e_j \to p_j}.
\end{equation}

According to~\eqref{eq: gauge invariance ei}, this must be zero since there exists no non-diagonal $(ij)\subset\bar{\rho}$.
\end{proof}

\begin{\mytheorem}\label{thm: Adler zero}
    The resulting $F_n^\mathrm{DR}$ has Adler zero if $F_n$ only contains local simple poles.
\end{\mytheorem}

\begin{proof}
Let us first consider the soft behaviour of $F_n$ before dimensional reduction, where we set $p_n = t\, \hat{p}_n$ with $t \to 0$. Denote the $O(t^k)$ order of Laurent series of $F_n$ by $\mathcal{F}^{k}_n$, then generally the leading order $\mathcal{F}^{-1}_n$ takes the following form:

\addtolength{\belowdisplayskip}{-0.3\baselineskip}

\begin{equation}
    \mathcal{F}^{-1}_n=  \sum_{i,j\neq n} \frac{e_n \cdot p_j}{p_n \cdot p_i} B^{(0)}_{ij} + \frac{e_n \cdot e_j}{p_n \cdot p_i} C^{(0)}_{ij},\quad B^{(0)}_{ij},C^{(0)}_{ij}\sim O(t^0).
\end{equation}

Note that (weak) locality condition forbids any pole in form of $(\hat{p}_n\cdot p_k)^{-1}$ in $B^{(0)}_{ij},C^{(0)}_{ij}$,\pagebreak[4] hence $B^{(0)}_{ij},C^{(0)}_{ij}$ are free of $\hat{p}_n$. Gauge invariance further demands that

\addtolength{\belowdisplayskip}{0.3\baselineskip}

\begin{equation}
    \mathcal{F}^{-1}_n\big|_{e_n \to p_n}=  \sum_{i,j\neq n} \frac{p_n \cdot p_j}{p_n \cdot p_i} B_{ij}^{(0)} + \frac{p_n \cdot e_j}{p_n \cdot p_i} C_{ij}^{(0)}=0.
\end{equation}

Note that $(p_n \cdot p_j)/(p_n \cdot p_i)$ with $i\neq j$ and all $(p_n \cdot e_j)/(p_n \cdot p_i)$ are linear independent, we must have $B^{(0)}_{ij}=0$ if $i\neq j$ and $C^{(0)}_{ij}=0$. This implies the \textit{soft theorem}

\begin{equation}
    \mathcal{F}^{-1}_n=  \sum_{i\neq n} \frac{e_n \cdot p_i}{p_n \cdot p_i} B^{(0)}_i,\;\ \text{with}\;\ \sum_{i\neq n} B^{(0)}_i=0.
\end{equation}

\addtolength{\belowdisplayskip}{-0.3\baselineskip}

Let $\mathrm{I}=\{n\}$, obviously $\mathcal{F}^{-1}_n$ under DR is zero:
\begin{equation}
    \mathcal{F}^{-1}_n\big|_{\text{DR with I}=\{n\}}=\sum_{i\neq n} \frac{e_n \cdot p_i}{p_n \cdot p_i} B^{(0)}_i\Big|_{ e_n\cdot p_i \to 0,\, \cdots}=0.
\end{equation}

\addtolength{\belowdisplayskip}{0.1\baselineskip}

On the other hand, the next-leading order $\mathcal{F}^{0}_n$ takes the following form:
\begin{equation}
    \mathcal{F}^{\,0}_n = \sum_{i,j\neq n} \frac{e_n \cdot p_j}{p_n \cdot p_i} B^{(1)}_{ij} + \frac{e_n \cdot e_j}{p_n \cdot p_i} C^{(1)}_{ij},\quad B^{(1)}_{ij},C^{(1)}_{ij}\sim O(t^1).
\end{equation}

Similarly, take $\mathrm{I}=\{n\}$, it is easy to see that
\begin{equation}
    \mathcal{F}^{\,0}_n\big|_{\text{DR with I}=\{n\}}=-\sum_{i,j\neq n} \frac{p_n \cdot p_j}{p_n \cdot p_i} C^{(1)}_{ij}\Big|_{\cdots}\sim O(t^1).
\end{equation}

\addtolength{\belowdisplayskip}{0.2\baselineskip}

Therefore $F_n^\mathrm{DR}$ must be at least of order $O(t^1)$. This completes the proof.
\end{proof}

\begin{remark}
Earlier literature~\cite{Arkani-Hamed:2016rak,Rodina:2016jyz} has found that assuming locality and the correct power-counting, uniqueness of (1) Yang-Mills and gravity, (2) NLSM and DBI tree amplitudes is ensured by (1) gauge invariance and (2) (enhanced) Adler zeros. Our {\mytheorem}s reveal that the two conditions are not independent, but rather related through dimensional reduction.
\end{remark}

\section{Pions in open superstrings}\label{sec: Pions in open superstrings}
Now we have obtained a non-constructive proof on the equivalence between different types of general DRs. However, we will not be content with just knowing \textit{that} those DRs are equal, but would also wonder \textit{how} they are equal for a specific string theory. Due to the extreme difficulty of evaluating string amplitudes, we develop a systematic method based on stringy IBP~\cite{He:2018pol,He:2019drm} to verify the DR relations at the integrand level. 
In this section, we first review the superstring integral and introduce a set of functions for $m$ gauge particles called \textit{IBP building blocks}~\eqref{def: IBP building blocks}. Next, we demonstrate the equivalence of all IBP building blocks within a family via certain IBP processes~\eqref{eq: IBP of IBP buildingblocks}. Finally, we set $m=n-2$ to show that the dimensional reduced $n$-particle superstring correlator can be written as linear combination of IBP building blocks~\eqref{eq: expand DR superstring correlator}, therefore establishing the equivalence of different DRs for superstring amplitudes under IBP.
We give a specific example for $n=5$ points at the end of this section and put off most of the tedious algebraic operations to Appendix~\ref{app.: Supplement for superstring theory}.
	
\subsection{Open superstrings and IBP building blocks}
The generic massless open-string tree amplitude is given by a disk integral:
\addtolength{\belowdisplayskip}{-0.2\baselineskip}
\begin{equation}
    \mathcal{M}_n^{\text {string }}(\rho)=\int_\rho \underbrace{\frac{d^n z}{\operatorname{vol~SL}(2, \mathbb{R})} \prod_{i<j}\left|z_{i j}\right|^{\alpha's_{i j}}}_{:=d \mu_n^{\text {string }}} \mathcal{I}_n^{\text {string }}(z), \quad \mathrm{KN}:=\prod_{i<j}\left|z_{i j}\right|^{\alpha's_{i j}}
\end{equation}
where $z_{ij}:=z_i - z_j$ and $s_{ij}:=2p_i\cdot  p_j$ are the Mandelstam variables. The color ordering $\rho\in S_n /\mathbb{Z}_n$ is realized by the integration domain $z_{\rho(i)} < z_{\rho(i+1)}$. We denote the Koba-Nielsen factor as KN and the integral measure including it as $d \mu_n^{\text{string}}$. 
Using the $\text{SL}(2,\mathbb{R})$ redundancy one can fix e.g. $(z_1, z_{n-1}, z_n)=(0, 1, \infty)$, and the product in the Koba-Nielsen factor goes over all $i,j$ such that $1\leqslant i < j\leqslant n-1$ with this fixing. The string correlator $\mathcal{I}_n^{\text{string}}$ is a rational function of $z$'s, which is required to have the correct SL(2) weight: $\mathcal{I}_n^{\text{string}}\to\prod_{a=1}^n (\gamma+\delta z_a)^2\mathcal{I}_n^{\text{string}}$ under $z_a\to-\frac{\alpha+\beta z_a}{\gamma+\delta z_a}$ with $\alpha\delta-\beta\gamma=1$. For convenience in the following text, let us introduce the $2n\times 2n$ skew-symmetric matrix $\mathbf\Psi$ constructed by:
\addtolength{\abovedisplayskip}{-0.25\baselineskip}
\begin{equation}\label{def: CHY building blocks}
\begin{aligned}
&({\mathbf A})_{ij} := \begin{dcases}
	\phantom{\frac{1}{1}}\ 0 & \text{if}\quad i = j,\\[-2pt]
	\frac{2p_{i} \cdot p_{j}}{z_{ij}} & \text{otherwise},
\end{dcases} \qquad
({\mathbf B})_{ij} := \begin{dcases}
	\phantom{\frac{1}{1}}\ 0 & \text{if}\quad i = j,\\[-2pt]
	\frac{2 e_i \cdot  e_j}{z_{ij}} & \text{otherwise},
\end{dcases} \\[5pt]
&({\mathbf C})_{ij} := \begin{dcases}
	-\sum_{k\neq i} \frac{2 e_i \cdot p_k}{z_{ik}} & \text{if}\quad i = j,\\
	\ \ \frac{2 e_i \cdot p_j}{z_{ij}} & \text{otherwise},
\end{dcases}\qquad 
	{\mathbf\Psi} := \left(\begin{array}{cc}
	\mathbf{A} & -{\mathbf C}^\text{T}\\
	{\mathbf C} & \mathbf{B} \\
    \end{array}\right)
\end{aligned}
\end{equation}

As shown in~\cite{Mizera:2019gea}, the right/left-moving superstring integrand reads:
\begin{equation}\label{eq: superstring correlator}
    \varphi_{\pm,n}^{\text{gauge}}=\frac{1}{z_{i_0 j_0}}\sum_{q=0}^{\lfloor n/2\rfloor -1}(\mp\alpha')^{-q}\sum_{\rho\in\mathfrak{S}_q}\prod_{(ij)\in\rho}\frac{2 e_{i}\cdot e_{j}}{z_{i j}^2}\;\operatorname{Pf}{\mathbf\Psi}_{i_0,j_0}^{\rho}
\end{equation}
where $\pm$ denotes the right/left movers respectively, and we only consider $\varphi_{+,n}^{\text{gauge}}$ (or $\varphi_{n}^{\text{gauge}}$ in short) here since we will focus on the open string, leaving the discussion of closed string to Section~\ref{sec: Pions in open bosonic strings and closed strings}. In this section we choose $(i_0,j_0)=(1,2)$, which can be chosen arbitrarily, since different choices of $(i_0,j_0)$ are cohomologous to each other. The second sum goes over all possible partitions of $2q$ particles among $\{3,\ldots,n\}$ into $q$ distinct pairs, as defined in Section~\ref{sec: Adler zero from gauge invariance}. Here we use ${\mathbf\Psi}^{\rho}_{12}$ to denote the $2(n{-}2q{-}1)\times 2(n{-}2q{-}1)$ matrix yielded by removing the 1st, 2nd and the $i,j,n{+}i,n{+}j$-th columns and rows from $\mathbf\Psi$ for each $(ij)\in\rho$.
\addtolength{\abovedisplayskip}{0.25\baselineskip}
\addtolength{\belowdisplayskip}{0.2\baselineskip}

To enhance understanding of the notation, let us give a few leading terms in~\eqref{eq: superstring correlator}:
\begin{align*}
\varphi^{\text{gauge}}_{n}&=\frac{1}{z_{12}} \Bigg[\, \text{Pf}\ {\mathbf\Psi}_{12} -\frac{1}{\alpha'} \sum_{3\leqslant i_1 < j_1 \leqslant n} \frac{2 e_{i_1} \cdot  e_{j_1}}{z_{i_1 j_1}^2} \text{Pf}\ {\mathbf\Psi}^{i_1 j_1}_{12} \\
&+\frac{1}{\alpha'^2}\Bigg(\sum_{3 \leqslant i_1 < j_1 < i_2 < j_2 \leqslant n} \!{+}\! \sum_{3 \leqslant i_1 < i_2 < j_1 < j_2 \leqslant n}\Bigg) \frac{2 e_{i_1} \cdot  e_{j_1}}{z_{i_1j_1}^2} \frac{2 e_{i_2} \cdot  e_{j_2}}{z_{i_2j_2}^2} \text{Pf}\ {\mathbf\Psi}^{i_1 j_1 i_2 j_2}_{12} + \cdots \Bigg]
\end{align*}

Directly performing IBP on integrand~\eqref{eq: superstring correlator} would be tedious and lack of universality. This motivates us to consider the IBP-equivalent function families, namely the IBP building blocks, which consist of the two letters:
\begin{equation}\label{def: VW}
V_i:=\sum _{j\neq i} \frac{s_{i j}}{z_i-z_j},\qquad W_{ij}:=-\frac{s_{ij}}{\alpha'(z_i-z_j)^2}.
\end{equation}

Importantly, the two letters and the Koba-Nielsen factor are related by:
\begin{equation}\label{derivative of VW}
    \partial_i \operatorname{KN}=\alpha'V_i\cdot\operatorname{KN},\quad \partial_i V_j=-\alpha'W_{ij},\qquad \partial_i:=\partial_{z_i}
\end{equation}

\addtolength{\abovedisplayskip}{-0.2\baselineskip}

For a given split of $m$ gauge particles into $\text{I}\sqcup\text{II}$, the IBP building block is defined as:
\begin{equation}\label{def: IBP building blocks}
    \mathcal{I}({\text{I}|\text{II}})=\sum_{q=0}^{\lfloor m/2 \rfloor} \sum_{\rho\in(\mathfrak{S}_q|\text{I})}\prod_{(ij)\in \rho}W_{ij} \; \prod_{k\in\bar{\rho}}V_{k}
\end{equation}
where $\bar{\rho}$ is the complement of $\rho$ with respect to $\text{I}\sqcup\text{II}$, $(\mathfrak{S}_q|\text{I})$ denotes the set of all $\rho\in\mathfrak{S}_q$ such that $\text{I}\subset\rho$ and each pair in $\rho$ has nonempty intersection with $\text{I}$, as defined before~\eqref{eq: equivalent DR proof}. Here are some examples and relations for IBP building blocks:
\addtolength{\abovedisplayskip}{0.2\baselineskip}
\begin{equation}\label{eq: examples of IBP building blocks}
\begin{aligned}
&\mathcal{I}(\{1,3\}|\{2,4\})=V_2 V_4 W_{13}+W_{14} W_{23}+W_{12} W_{34}\\[7pt]
&\mathcal{I}(\{1,2,4\}|\{3,5\})=V_5 W_{1,4} W_{2,3}+V_5 W_{1,3} W_{2,4}+V_3 W_{1,5} W_{2,4}+V_3 W_{1,4} W_{2,5}\\
&\qquad\qquad\qquad\qquad\ +V_5 W_{1,2} W_{3,4}+V_3 W_{1,2} W_{4,5}\\[7pt]
&\mathcal{I}(\{1,2,\dots,2n-1\}|\varnothing)=0\qquad 
\mathcal{I}(\varnothing|\{1,2,\dots,n\})=V_1 V_2\cdots V_n\\[7pt]
&\mathcal{I}(\{1,2,\dots,2n-1\}|\{2n\})=\mathcal{I}(\{1,2,\dots,2n\}|\varnothing)\\[7pt]
&\mathcal{I}(\{1,2,\dots,2n-2\}|\{2n-1\})=V_{2n-1}\mathcal{I}(\{1,2,\dots,2n-2\}|\varnothing)\\[2pt]
\end{aligned}
\end{equation}

We define a family of IBP building blocks as those that share identical $\text{I}\sqcup\text{II}$, these IBP building block families turn out to be exactly the maximal IBP-equivalent function families we need. Proof of the equivalence will be given in the next subsection.

\begin{remark}
    From~\eqref{eq: examples of IBP building blocks} we see that $\mathcal{I}(\{1,2,\dots,n\}|\varnothing){=}0$ for any odd $n$, which is consistent with Corollary~\ref{cor: oddDRvanish} and the known property of vanishing odd-point amplitudes in NLSM.
\end{remark}

\subsection{The equivalence of DRs under IBP}
\subsubsection{The equivalence of~\texorpdfstring{$\mathcal{I}(\text{I}|\text{II})$}{I(I|II)}~under IBP}
In the forthcoming discussion, we consider a $n$-point scattering process with the particles ordered as $\{1,2,\dots,n\}$. For two IBP building blocks within the same family of the form $\mathcal{I}(\{i_1,\ldots,i_{k}\}|\{i_{k+1},\ldots,i_m\})$ and $\mathcal{I}(\{i_1,\ldots,i_{k+1}\}|\{i_{k+2},\ldots,i_m\})$, once we prove that such two IBP building blocks are IBP-equivalent, it follows from transitivity that all IBP building blocks in the same family are IBP-equivalent. For simplicity, let us temporarily abbreviate $\mathcal{I}(\{i_1,\ldots,i_{k}\}|\{i_{k+1},\ldots,i_m\})$ to $\mathcal{I}(\text{I}'|\text{II}')$ and $\mathcal{I}(\{i_1,\ldots,i_{k+1}\}|\{i_{k+2},\ldots,i_m\})$ to $\mathcal{I}(\text{I}''|\text{II}'')$. To prove the IBP-equivalence between $\mathcal{I}(\text{I}'|\text{II}')$ and $\mathcal{I}(\text{I}''|\text{II}'')$, it is necessary to look into their summation ranges for a given $q$, namely $(\mathfrak{S}_q|\mathrm{I}')$ and $(\mathfrak{S}_q|\mathrm{II}')$, and tell their difference. With careful observation, we find that
\begin{equation}
\begin{aligned}
    (\mathfrak{S}_q|\mathrm{I}')\backslash (\mathfrak{S}_q|\mathrm{I}'') &= \left\{\rho\in(\mathfrak{S}_{q}|\mathrm{I}')\mspace{1mu}\big{|}\mspace{2mu}i_{k+1}\in\bar{\rho} \mspace{1mu}\right\},\\
    (\mathfrak{S}_q|\mathrm{I}'')\backslash (\mathfrak{S}_q|\mathrm{I}') &= \left\{\rho\in(\mathfrak{S}_{q}|\mathrm{I}')\mspace{1mu}\big{|} \mspace{1mu}(i_{k+1}i_\ell)\in\rho\mspace{2mu},\mspace{1mu}i_\ell\in\mathrm{II}''\right\}.
\end{aligned}
\end{equation}

That is to say, the terms in $\mathcal{I}(\text{I}'|\text{II}')$ but not in $\mathcal{I}(\text{I}''|\text{II}'')$ are those containing $V_{i_{k+1}}$, while those in $\mathcal{I}(\text{I}''|\text{II}'')$ but not in $\mathcal{I}(\text{I}'|\text{II}')$ are those containing $W_{i_{k+1}i_\ell}$ for some $i_\ell\in\mathrm{II}''$. This inspires us to extract the respective common factors $V_{{i_{k+1}}}$ and $W_{{i_{k+1}i_\ell}}$. Note from~\eqref{def: IBP building blocks} that the coefficient of $V_i$ or $W_{ij}$ in $\mathcal{I}(\text{I}|\text{II})$ is a minor IBP building block excluding $i$ or $i,j$ from the ``gauge particles'' list $\text{I}\sqcup\text{II}$, we can express $\mathcal{I}(\text{I}'|\text{II}')-\mathcal{I}(\text{I}''|\text{II}'')$ in the minor IBP building blocks $\mathcal{I}(\text{I}'|\text{II}'')$ and $\mathcal{I}(\text{I}'|\text{II}''\mspace{1mu}\backslash\{i_\ell\})$ as follows:

\begin{equation}\label{eq: IBP building blocks difference}
    \mathcal{I}(\text{I}'|\text{II}')-\mathcal{I}(\text{I}''|\text{II}'')=V_{i_{k+1}}\mathcal{I}(\text{I}'|\text{II}'')-\sum_{i_\ell\in\text{II}''} W_{i_{k+1}i_\ell}\mspace{2mu} \mathcal{I}(\text{I}'|\text{II}''\mspace{1mu}\backslash\{i_\ell\})\mspace{2mu}.
\end{equation}

Then we will observe from IBP and the Leibniz rule that the two IBP building blocks $\mathcal{I}(\text{I}'|\text{II}')$ and $\mathcal{I}(\text{I}''|\text{II}'')$ multiplied by KN are equivalent as expected:
\begin{equation}\label{eq: IBP of IBP buildingblocks}
\begin{aligned}
    \text{KN}\cdot V_{i_{k+1}}\mathcal{I}(\text{I}'|\text{II}'')=\frac{1}{\alpha'}(\partial_{i_{k+1}}\text{KN})\cdot\mathcal{I}(\text{I}'|\text{II}'')&\xlongequal{\text{IBP}}-\frac{1}{\alpha'}\text{KN}\cdot\partial_{i_{k+1}}\mathcal{I}(\text{I}'|\text{II}'')\\[4pt]
    =-\frac{1}{\alpha'}\text{KN}\cdot\sum_{i_\ell\in\text{II}''} (\partial_{i_{k+1}}V_{i_\ell})\mspace{2mu} \mathcal{I}(\text{I}'|\text{II}''\backslash\{i_\ell\})&=\text{KN}\cdot\sum_{i_\ell\in\text{II}''} W_{i_{k+1}i_\ell}\mspace{2mu} \mathcal{I}(\text{I}'|\text{II}''\backslash\{i_\ell\})\mspace{2mu},\\
    \Rightarrow \ \text{KN}\cdot \mathcal{I}(\text{I}'|\text{II}') \xlongequal{\text{IBP}} &\; \text{KN}\cdot \mathcal{I}(\text{I}''|\text{II}'')\mspace{2mu}.
\end{aligned}
\end{equation}

The boundary term emerging from IBP vanishes due to the short-distance behavior of KN, {\it i.e.} $\operatorname{KN}|_{z_i\to z_j}\to 0$. Now starting from a given IBP building block, we can transitively prove its equivalence to any other IBP building blocks in the same family with a sequence of IBPs and appropriately chosen $\text{SL}(2,\mathbb{R})$ gauge fixing such that $z_{i_{k+1}}$ is free coordinate at each step. From the preceding argument we have proved that

\begin{\mytheorem}\label{thm: IBP building block equivalence}
For IBP building blocks in the same family we have
\begin{equation}
    \int_\rho d\mu_n^\mathrm{string}~\mathcal{I}(\mathrm{I}_1|\mathrm{II}_1)\xlongequal{\mathrm{IBP}}\int_\rho d\mu_n^\mathrm{string}~\mathcal{I}(\mathrm{I}_2|\mathrm{II}_2)\quad,\quad\forall\,\mathrm{I}_1\sqcup\mathrm{II}_1=\mathrm{I}_2\sqcup\mathrm{II}_2
\end{equation}
\end{\mytheorem}

To end this subsection, let us illustrate \myTheorem~\ref{thm: IBP building block equivalence} with the following example:
\begin{equation}\nonumber
\begin{aligned}
    \mathcal{I}(\{1,3\}|\{2,4\})-\mathcal{I}(\{1,2,3\}|\{4\})&=(V_2\mathcal{I}(\{1,3\}|\{4\})+W_{14} W_{23}+W_{12} W_{34})\\
    &- (W_{24}\mathcal{I}(\{1,3\}|\varnothing)+W_{14} W_{23}+W_{12} W_{34})\\
    &=V_2 V_4 W_{13}-W_{24} W_{13}\\[5pt]
    \operatorname{KN}\cdot V_2 V_4 W_{13}=\frac{1}{\alpha'}(\partial_{z_2}\operatorname{KN})\cdot (V_4 W_{13})&\xlongequal{\mathrm{IBP}}-\frac{1}{\alpha'}\operatorname{KN}\cdot \partial_{z_2}(V_4 W_{13})=\operatorname{KN}\cdot W_{24}W_{13}\\
        \Rightarrow \mathcal{I}(\{1,3\}|\{2,4\})&\xlongequal{\mathrm{IBP}}\mathcal{I}(\{1,2,3\}|\{4\})
\end{aligned}
\end{equation}

\subsubsection{DR of open superstrings} \label{sec: DR of open superstring}
From \myTheorem~\ref{thm: IBP building block equivalence} we see that for any set $\mathrm{T}$, the integral of $\mathcal{I}(\mathrm{T}\cap\text{I}|\mathrm{T}\cap\text{II})$ is independent of split of $m=n-2$ particles $\{3,\ldots,n\}$ into $\mathrm{I}\sqcup\mathrm{II}$. Thus if the DR of superstring correlator can be reformulated as linear combination of $\mathcal{I}(\mathrm{T}\cap\text{I}|\mathrm{T}\cap\text{II})$s, then the resulting stringy NLSM is independent of how we choose the split in general DR~\eqref{eq: DRgeneral}, and it is the case.

\begin{\mytheorem}
\label{thm: superstring linear expand}
The dimensional reduced open superstring correlator can be written as
\begin{equation}\label{eq: expand DR superstring correlator}
    \varphi^{\mathrm{scalar}}_{n} =\mathcal{DR}\left(\partial_{e_1\cdot e_2}\varphi^{\mathrm{gauge}}_{n}\right) =\frac{2(-1)^{\lfloor\frac{n}{2}\rfloor-1}}{z_{12}^2}\sum_{\substack{\mathrm{T}\subset\{3,\ldots,n\}\\ \left |\mathrm{T}\right |+n=\mathrm{even}}}\mathcal{I}(\mathrm{T}\cap\mathrm{I}\mspace{1mu}|\mspace{1mu}\mathrm{T}\cap\mathrm{II})\,\mathrm{det}\mspace{1mu}(\mathbf{A}_{(12)\mspace{1mu}\sqcup\mspace{1mu}\mathrm{T}}),
\end{equation}
where $\mathbf{A}_\mathrm{T}$ denotes the matrix $\mathbf{A}$ defined in~\eqref{def: CHY building blocks} with its $i$-th columns and rows removed for $\forall\, i\in \mathrm{T}$, hence $\mathrm{det}(\mathbf{A}_{(12)\mspace{1mu}\sqcup\mspace{1mu}\mathrm{T}})$ is independent of $z_i\mspace{1mu}|_{\mspace{2mu}i\mspace{1mu}\in\mspace{1mu}(12)\mspace{1mu}\sqcup\mspace{1mu}\mathrm{T}}$.
\end{\mytheorem}
For brevity, we will only discuss the application of \myTheorem~\ref{thm: superstring linear expand} here, leaving its proof to Appendix~\ref{app.: Supplement for superstring theory}. Let $\text{I}=\varnothing$, the resulting stringy NLSM amplitude is given by
\begin{equation}\label{eq: IBP@DR superstring correlator}
\begin{split}
    \mathcal{M}^{\text{NLSM}}_{n}=&
    \int d\mu_n^\text{string}~\frac{2(-1)^{\lfloor\frac{n}{2}\rfloor-1}}{z_{12}^2}\sum_{\substack{\mathrm{T}\subset\{3,\ldots,n\}\\ \left |\mathrm{T}\right |+n=\mathrm{even}}}\mathcal{I}(\varnothing\mspace{1mu}|\mspace{1mu}\mathrm{T})\,\mathrm{det}\mspace{1mu}(\mathbf{A}_{(12)\mspace{1mu}\sqcup\mspace{1mu}\mathrm{T}})\\
    =&\int d\mu_n^\text{string}~\frac{2}{z_{12}^2}\operatorname{Pf}\begin{pmatrix}\mathbf{A}&\mathbf{E}\\-\mathbf{E}&-\mathbf{A}\end{pmatrix}^{12}
    =\int d\mu_n^\text{string}~\frac{2}{z_{ij}^2}\operatorname{Pf}\begin{pmatrix}\mathbf{A}&\mathbf{E}\\-\mathbf{E}&-\mathbf{A}\end{pmatrix}^{ij},
\end{split}
\end{equation}
where $\mathbf{E}$ is the $n\times n$ diagonal matrix defined as $(\mathbf{E})_{ij}=\delta_{ij}V_i$, the second line comes from the expansion of the Pfaffian in $V_i$~\eqref{eq: Pf expansion in V}. A even briefer expression of stringy NLSM amplitude corresponding to $\text{II}=\varnothing$ is given by
\begin{equation}\label{eq: (super) stringy nlsm}
    \mathcal{M}^{\text{NLSM}}_{n}
    =\int d\mu_n^\text{string}~\frac{2}{z_{ij}^2}\operatorname{det}(\mathbf{\tilde A}_{ij}),\qquad \mathbf{\tilde A}:=\mathbf{A}-\mathbf{E}.
\end{equation}

Here we give an explicit example for $n=5$, $\text{I}=\{3\}$, $\text{II}=\{4,5\}$:
\begin{equation}
\begin{aligned}
    \varphi^{\mathrm{scalar}}_{5}&=\frac{-2}{z_{12}^2}\left(
    \begin{aligned}
        &\det\left(\mathbf{A}_{124}\right) \mathcal{I}(\varnothing|\{4\})+\det\left(\mathbf{A}_{125}\right) \mathcal{I}(\varnothing|\{5\})+\\
        \,+&\det\left(\mathbf{A}_{123}\right) \mathcal{I}(\{3\}|\varnothing)+\det\left(\mathbf{A}_{12345}\right) \mathcal{I}(\{3\}|\{4,5\})\,
    \end{aligned}\right)\\
    &=\frac{-2}{z_{12}^2}\left(
        \det\left(\mathbf{A}_{124}\right)V_4 +\det\left(\mathbf{A}_{125}\right)V_5 +\det\left(\mathbf{A}_{12345}\right) \left(V_5 W_{34}+V_4 W_{35}\right) \right)\\
    &=\frac{2}{z_{12}^2}\times\underset{\sharp}{\underbrace{-\left((\mathbf{A})_{35}^2 V_4+(\mathbf{A})_{34}^2 V_5+V_5 W_{34}+V_4 W_{35}\right)}}\,,\\[3pt]
    \sharp&=-\left((\mathbf{A})_{35}^2 V_4+(\mathbf{A})_{34}^2 V_5-(\alpha')^{-1}\partial_{z_3}(V_4 V_5)\right)\\
    &\xlongequal{\text{IBP}}-\left((\mathbf{A})_{35}^2 V_4+(\mathbf{A})_{34}^2 V_5+V_3 V_4 V_5\right)
    =\operatorname{Pf}\begin{pmatrix}\mathbf{A}&\mathbf{E}\\-\mathbf{E}&-\mathbf{A}\end{pmatrix}^{12}.
\end{aligned}
\end{equation}

This odd-$n$ integral, as expected, finally evaluates to zero.

\section{Mixed amplitudes and logarithmic forms}\label{sec: Mixed amplitudes and logarithmic forms}
In this section, we extend the stringy NLSM~\eqref{eq: (super) stringy nlsm} to give the stringy UV completion of the mixed amplitudes of $\phi^3$ and pions scattering. We also develop a systematic method to compute its low-energy limit by giving the logarithmic form of the string integral, thus the corresponding $\alpha^\prime$ order can be obtained by plugging in the result of Z-integral~\cite{Mafra:2016mcc}.

\addtolength{\abovedisplayskip}{-0.15\baselineskip}

To begin with, we shall give the superstring YMS integrand of the mixed amplitudes of gluons and bi-adjoint scalars. As suggested in~\cite{Cheung:2017ems}, for the scattering of gluons $\{i_1,i_2,\ldots,i_m\}$ and scalars ordered by $(1,\alpha,n)$, it is natural to identify the superstring integrand as
\begin{equation}\label{eq: gauge+color definition}
    \varphi^{\text{gauge}+\text{color}}_{n}(\{i_1,i_2,\ldots,i_m\}|1,\alpha,n)=\partial_{\mspace{2mu}2 e_{1} \cdot e_{n}} \prod_{i=1}^{|\alpha|} \partial_{\mspace{2mu}2 e_{\alpha_i} \cdot (p_{\alpha_{i-1}} - p_{n}) } \varphi^{\text{gauge}}_{n},
\end{equation}
where we define $\alpha_0:=1$ and choose $(i_0,j_0)=(1,n)$ in~\eqref{eq: superstring correlator} to match the conventions in~\cite{Cheung:2017ems}. Algebraic operations (see Appendix~\ref{app.: Supplement for mixed amplitude} for details) show that 
\begin{equation}\label{eq: gauge+color correlator}
\begin{split}
\hspace{-0.5em}\varphi^{\text{gauge}+\text{color}}_{n}(\{i_1,\ldots,i_m\}|1,\alpha,n) &=\mathrm{PT}(1,\alpha,n)\! \sum_{q=0}^{\lfloor m/2 \rfloor}\! (-\alpha')^{-q}\! \sum_{\rho\in\mathfrak{S}_q} \prod_{(ij)\in\rho} \!\frac{2 e_{i}\mspace{-1mu}\cdot\mspace{-1mu} e_{j}}{z_{i j}^2}\,\operatorname{Pf}\mathbf{\Psi}^{1,\alpha,\rho,n}\mspace{1mu},\\[2pt]
\mathrm{PT}(\alpha):\mspace{-5mu}&=\frac{1}{z_{\alpha(1)\alpha(2)} z_{\alpha(2)\alpha(3)}\cdots z_{\alpha(n)\alpha(1)}}\,,
\end{split}
\raisetag{35pt}
\end{equation}
where $\mathfrak{S}_q$, as defined in Section~\ref{sec: Adler zero from gauge invariance}, only contains gauge particles, while $\mathbf{\Psi}^{1,\alpha,\rho,n}$ denotes the matrix $\mathbf\Psi$ with its $i,n{+}i$-th columns and rows removed for each $i$ in $(1,\alpha,n)$ or $\rho$.

\addtolength{\abovedisplayskip}{0.15\baselineskip}

Now \myTheorem~\ref{thm: equivalent DR} guarantees that for any split of the $m$ gauge particles, dimensional reduction~\eqref{eq: DRgeneral} gives the same result. This result is then naturally identified as the stringy correlator of the scattering of pions $\{i_1,i_2,\ldots,i_m\}$ and colored scalars $(1,\alpha,n)$:
\begin{equation}
    \varphi^{\text{gauge}+\text{color}}_{n}(\{i_1,i_2,\ldots,i_m\}|1,\alpha,n)\;  \xrightarrow{\,\eqref{eq: DRgeneral}\,}\; \varphi^{\text{scalar}+\text{color}}_{n}(\{i_1,i_2,\ldots,i_m\}|1,\alpha,n).
\end{equation}

Let $\text{I}=\varnothing$ in DR~\eqref{eq: DRgeneral}, the stringy integrand for mixed amplitude reduces to:
\begin{equation}\label{eq: mixed amplitude in det(A) PT}
    \varphi^{\text{scalar}+\text{color}}_{n}(\{i_1,i_2,\ldots,i_m\}|1,\alpha,n)=\operatorname{det}\mathbf{\tilde A}_{\{i_1,i_2,\ldots,i_m \}} \operatorname{PT}(1,\alpha,n),
\end{equation}
where we use $\mathbf{\tilde A}_{\{i_1,i_2,\ldots,i_m \}}$ with braces to denote $\mathbf{\tilde A}$ defined in~\eqref{eq: (super) stringy nlsm} with only columns and rows in $\{i_1,i_2,\ldots,i_m \}$ \textit{left}. Note that for $\alpha=\varnothing$, it reduces to the pure pions scattering.

\addtolength{\abovedisplayskip}{-0.15\baselineskip}

Our next task is to expand~\eqref{eq: mixed amplitude in det(A) PT} into a more explicit form. By fixing $z_n\to\infty$ and omitting all $z_{i n}\to\infty$, we can draw an analogy to the matrix tree theorem~\cite{Feng:2012sy} to derive
\begin{equation}\label{eq: pion+color correlator}
    \operatorname{det}\mathbf{\tilde A}_{\{i_1,i_2,\ldots,i_m \}}\operatorname{PT}(1,\alpha,n) = (-1)^{n}\sum_{G(1,\alpha)} \prod_{e(i,j)} \frac{s_{ij}}{z_{ij}} \prod_{k=1}^{|\alpha|} \frac{1}{s_{\alpha_{k-1}\alpha_{k}} },
\end{equation}
where the summation goes over all labelled trees $G(1,\alpha)$ containing the sub-tree $(1,\alpha)$, with nodes $\{1,2,\ldots,n-1\}$ and orientations of the $n-2$ edges $e(i,j)$ flowing to the root node $1$. For the definition of labelled tree, see~\cite{Gao:2017dek}. Proof of~\eqref{eq: pion+color correlator} will be given in Appendix~\ref{app.: Supplement for mixed amplitude}.

\addtolength{\abovedisplayskip}{0.15\baselineskip}

For instance, let $n=5$ with particles $3,4$ to be pions and omit $z_{i n}\to\infty$, we have:
\begin{equation} \label{eq: MTexample}
\begin{aligned}
\operatorname{det}\mathbf{\tilde A}_{\{3,4 \}} \operatorname{PT}(1,2,5)
&= \frac{s_{13} s_{14}}{z_{12} z_{31} z_{41}}+\frac{s_{23} s_{14}}{z_{12} z_{32} z_{41}}+\frac{s_{34} s_{14}}{z_{12} z_{34} z_{41}}+\frac{s_{13} s_{24}}{z_{12} z_{31} z_{42}}\\
&+\frac{s_{23} s_{24}}{z_{12} z_{32} z_{42}}+\frac{s_{24} s_{34}}{z_{12} z_{34} z_{42}}+\frac{s_{13} s_{34}}{z_{12} z_{31} z_{43}}+\frac{s_{23} s_{34}}{z_{12} z_{32} z_{43}}.
\end{aligned}
\end{equation}

Now we are just one step away from the logarithmic forms we expect. Notice that the denominators in~\eqref{eq: pion+color correlator} are Cayley functions (generalization of PT factors) defined in~\cite{Gao:2017dek}, thus we can apply (3.3) in~\cite{Gao:2017dek} to expand these denominators to Kleiss-Kuijf (KK) basis. This yields the logarithmic forms for mixed scattering of pions and colored scalars:
\begin{equation} \label{eq: BCJnumerator}
    \varphi^{\text{scalar}+\text{color}}_{n}(\{i_1,i_2,\ldots,i_m\}|1,\alpha,n)
    =\sum_{\substack{\sigma\in \mathrm{Perm}(i_1,\ldots,i_m)\\[1pt] \rho=\alpha \shuffle \sigma}} \prod_{a=1}^{m} s_{i_a|1\rho_{:i_a}} \operatorname{PT}(1,\rho,n),
\end{equation}
where the first summation goes over the permutations of the pion list $\sigma$, and the second summation goes over the shuffle product of $\alpha$ and $\sigma$. Here $s_{i_a|J}:=2\mspace{2mu}p_{i_a} \cdot (\sum_{j\in J}p_j)$, while $1\rho_{:i_a}$ denotes the elements in $\{1\}\sqcup\rho$ from the beginning to $i_a$, \textit{e.g.} $(1,2,5,4,3)_{:5}=(1,2,5)$. We remark that for $\alpha=\varnothing$, this is exactly the well-known BCJ numerator for pure pions scattering~\cite{Du:2016tbc,Carrasco:2016ldy}, see also~\cite{He:2021lro,Cheung:2021zvb}. 
The above example~\eqref{eq: MTexample} can be now written as
\begin{equation}
\begin{aligned}
    \varphi^{\text{scalar}+\text{color}}_{n}(\{3,4\}|1,2,5)
    &=s_{3|12}\, s_{4|123}\operatorname{PT}(1,2,3,4,5)+ s_{13}\, s_{4|123}\operatorname{PT}(1,3,2,4,5) \\
    &+ s_{13}\, s_{4|13}\operatorname{PT}(1,3,4,2,5)+ s_{4|12}\, s_{3|124}\operatorname{PT}(1,2,4,3,5)\\
    &+ s_{14}\, s_{3|124}\operatorname{PT}(1,4,2,3,5)+ s_{14}\, s_{3|14}\operatorname{PT}(1,4,3,2,5).
\end{aligned}
\end{equation}

The closed formula of the logarithmic form~\eqref{eq: BCJnumerator} provides a powerful algorithm to compute the amplitudes in $\alpha^\prime \to 0$ limit: we just need to plug in the corresponding $\alpha^\prime$ order of Z-integral, which have been computed in~\cite{Mafra:2016mcc}. In particular, for the leading field theory limit, we plug in the well-known bi-adjoint $\phi^3$ amplitudes $m(1,2,\ldots,n|1,\rho,n)$~\cite{Cachazo:2013iea} for each PT$(1,\rho,n)$. For example, the field theory limit of~\eqref{eq: MTexample} simply reads
\begin{equation} \label{eq:mixedEx1}
\begin{aligned}
    A^{\mathrm{NLSM}+ \phi^3}_5(\{3,4\}|1,2,5)=-1+\frac{X_{1,4}}{X_{1,3}}+\frac{X_{3,5}}{X_{1,3}}+\frac{X_{3,5}}{X_{2,5}}+\frac{X_{2,4}}{X_{2,5}}.
\end{aligned}
\end{equation}
where we use the planar variables $X_{i,j}:= (p_i+p_{i+1}+\cdots+p_{j-1})^2$ defined in~\cite{Arkani-Hamed:2017mur}. Another example for $n=8$ in the field theory limit with $4$ disjointed pions is given by
\begin{equation*}
\begin{aligned}
    A^{\mathrm{NLSM}+ \phi^3}_8(\{1,3,5,7\}|2,4,6,8)&=\frac{1}{2 X_{1,5}}+\frac{1}{2 X_{2,6}}-\frac{X_{1,3}+X_{2,4}}{X_{1,4} X_{1,5}}-\frac{X_{1,7}+X_{2,8}}{X_{2,6} X_{2,7}}\\[2pt]
    &\hspace{-3em}+\frac{\left(X_{1,7}+X_{2,8}\right) \left(X_{3,5}+X_{4,6}\right)}{2 X_{2,6} X_{2,7} X_{3,6}}+\frac{\left(X_{1,7}+X_{2,8}\right) \left(X_{3,5}+X_{4,6}\right)}{2 X_{2,7} X_{3,6} X_{3,7}}\\[2pt]
    &\hspace{-3em}+ (\text{cyclic}\; i \to i+2,i+4,i+6),
\end{aligned}
\end{equation*}
where we need to plug in a slightly different version of~\eqref{eq: BCJnumerator} with {\it e.g.} $(i_0,j_0)=(2,8)$ to be special instead of $(1,8)$. We also present an example for 9-point with 6 pions in Appendix~\ref{app.: Supplement for mixed amplitude}.

Apart from the field theory limit, the higher $\alpha'$ order can also be worked out through this method, \textit{e.g.} the $O(\alpha'\mspace{1mu}^2)$ correction of~\eqref{eq:mixedEx1} is given by: 
\begin{equation*}
\begin{aligned}
    \frac{\pi^2}{6}  \left(-\frac{X_{3,5} X_{1,4}^2}{X_{1,3}}-X_{2,5} X_{1,4}+X_{3,5} X_{1,4}-\frac{X_{3,5}^2 X_{1,4}}{X_{1,3}}+X_{3,5}^2-X_{1,3} X_{2,4}\right. \ \ &\\
    \left. +X_{1,3} X_{2,5}-X_{1,3} X_{3,5}+X_{2,4} X_{3,5}-X_{2,5} X_{3,5}-\frac{X_{2,4} X_{3,5}^2}{X_{2,5}}-\frac{X_{2,4}^2 X_{3,5}}{X_{2,5}}\right)&.
\end{aligned}
\end{equation*}

\section{Pions in open bosonic strings and closed strings}\label{sec: Pions in open bosonic strings and closed strings}
In this section, we first demonstrate that the stringy NLSM corresponding to the bosonic string is regardless of the different DR but the first derivative $\partial_{e_i\cdot e_j}$ in the same IBP-based method. Then, we generalize our discussion to closed string models for both the bosonic and super string. We also discuss other gauge theories/EFTs obtained via the KLT relation. We list some of their 4-point results as a simple comparison between the two versions of stringy NLSM.

\addtolength{\abovedisplayskip}{-0.2\baselineskip}
\addtolength{\belowdisplayskip}{-0.2\baselineskip}

\subsection{DR of bosonic strings}
The bosonic string correlator $\varphi_{\pm,n}^{\text{bosonic}}$~\cite{He:2019drm,Schlotterer:2016cxa} can be written as a formula very similar to the IBP building blocks~\eqref{def: IBP building blocks} in our $\mathfrak{S}$-notation:
\begin{equation}\label{def: bosonic string correlator}
    \varphi_{\pm,n}^{\text{bosonic}}=\sum_{q=0}^{\lfloor n/2 \rfloor} \sum_{\rho\in\mathfrak{S}_q}\prod_{(ij)\in \rho}\mathcal{B}^{\,\pm}_{(ij)} \; \prod_{k\in\bar{\rho}}\mathcal{B}_{(k)}.
\end{equation}

\addtolength{\abovedisplayskip}{0.1\baselineskip}
\addtolength{\belowdisplayskip}{0.1\baselineskip}

Here the two types of letters $\mathcal{B}_{(i)},\mathcal{B}^{\,\pm}_{(ij)}$ are defined as:
\begin{equation}
	\mathcal{B}_{(i)}:=\sum_{k\neq i} \frac{2e_i \cdot p_j}{z_{ij}},\qquad \mathcal{B}^{\,\pm}_{(ij)}:=\pm\frac{2e_i \cdot e_j}{\alpha'z_{ij}^2}.
\end{equation}

\addtolength{\abovedisplayskip}{0.1\baselineskip}
\addtolength{\belowdisplayskip}{0.1\baselineskip}

This subsection only involves $\varphi_{+,n}^{\text{bosonic}}$, which is abbreviated to $\varphi_n^{\text{bosonic}}$. Similarly we abbreviate $\mathcal{B}^{\,+}_{(ij)}$ to $\mathcal{B}_{(ij)}$. For example, the bosonic string correlator for $n=4$ is given by:
\begin{equation*}
\begin{aligned}
	\varphi_4^{\text{bosonic}}&=\mathcal{B}_{(12)} \mathcal{B}_{(3)} \mathcal{B}_{(4)} +\mathcal{B}_{(13)} \mathcal{B}_{(2)} \mathcal{B}_{(4)} +\mathcal{B}_{(14)} \mathcal{B}_{(2)} \mathcal{B}_{(3)} \\
    &+\mathcal{B}_{(23)} \mathcal{B}_{(1)} \mathcal{B}_{(4)} +\mathcal{B}_{(24)}\mathcal{B}_{(1)} \mathcal{B}_{(3)} +\mathcal{B}_{(34)}\mathcal{B}_{(1)} \mathcal{B}_{(2)}\\
    &+\mathcal{B}_{(12)} \mathcal{B}_{(34)} +\mathcal{B}_{(13)} \mathcal{B}_{(24)} +\mathcal{B}_{(14)} \mathcal{B}_{(23)} +\mathcal{B}_{(1)} \mathcal{B}_{(2)} \mathcal{B}_{(3)} \mathcal{B}_{(4)}
\end{aligned}
\end{equation*}

To prove the equivalence of DRs of bosonic string correlator with respect to the different split of gauge particles into $\text{I}\sqcup\text{II}$, we should express DRs of $\varphi_n^{\text{bosonic}}$ as linear combinations of IBP building blocks, then apply \myTheorem~\ref{thm: IBP building block equivalence}. We find it much simpler than the superstring cases, since the DRs of $\mathcal{B}_{(i)}$ and $\mathcal{B}_{(ij)}$ are exactly $V_i$ and $W_{i,j}$ defined in~\eqref{def: VW}:
\begin{equation}
	\mathcal{DR}(\mathcal{B}_{(i)})=
	\begin{dcases}
		0 & i\in\text{I}\\
		V_i & i\in\text{II}
	\end{dcases} ,\qquad
	\mathcal{DR}(\mathcal{B}_{(ij)})=
	\begin{dcases}
		W_{ij} & \{i,j\}\nsubseteq\text{II}\\
		0 & \{i,j\}\subseteq\text{II}
	\end{dcases}
\end{equation}

Comparing with the definition of IBP building blocks~\eqref{def: IBP building blocks}, one can easily see that:
\begin{equation}\label{eq: boscalar}
	\mathcal{DR}\left(\partial_{e_i\cdot e_j}\varphi_n^{\text{bosonic}}\right)=\frac{2}{\alpha' z_{ij}^2}\mathcal{I}(\text{I}|\text{II}),\qquad \text{I}\sqcup\text{II}=\{1,2,\dots,n\}\backslash\{i,j\}.
\end{equation}

For example, for the 6-point amplitude with first derivative $\partial_{e_5\cdot e_6}$ we have:
\begin{equation}
\begin{aligned}
	&\mathcal{DR}^{\text{II}=\{2,4\}}(\partial_{e_5\cdot e_6}\varphi_6^{\text{bosonic}})
    =\mathcal{DR}^{\text{II}=\{2,4\}}(\frac{2}{\alpha' z_{56}^2}\varphi_4^{\text{bosonic}})\\
    &=\frac{2}{\alpha' z_{56}^2}\left(V_2 V_4 W_{13}+W_{14} W_{23}+W_{12} W_{34}\right)=\frac{2}{\alpha' z_{56}^2}\mathcal{I}(\{1,3\}|\{2,4\}).
\end{aligned}
\end{equation}

Applying \myTheorem~\ref{thm: IBP building block equivalence}, we reach our desired conclusion that the general DRs~\eqref{eq: DRgeneral} of the bosonic string given the first derivative are equivalent, regardless of how we split $m$ gauge particles into $\text{I}\sqcup\text{II}$. However, the DRs with different first derivative $\partial_{e_i \cdot e_j}$ can be different, since there is no equivalence between $\mathcal{I}(\text{I}|\text{II})$ and $\mathcal{I}(\text{I}'|\text{II}')$ for $\text{I}\sqcup\text{II}\neq\text{I}'\sqcup\text{II}'$. For example, the integrated results of taking $\partial_{e_1 \cdot e_2}$ or $\partial_{e_1 \cdot e_4}$ at $n=4$ are:
\begin{equation}
\begin{aligned}
   e_1\cdot e_2:~&-\frac{s_{1 2} \Gamma \left(s_{1 2}-1\right) \Gamma \left(s_{2 3}+1\right)}{\Gamma \left(s_{1 2}+s_{2 3}\right)},\\
   e_1\cdot e_4:~&-\frac{s_{2 3} \Gamma \left(s_{1 2}+1\right) \Gamma \left(s_{2 3}-1\right)}{\Gamma \left(s_{1 2}+s_{2 3}\right)},
\end{aligned}
\end{equation}
where we set $\alpha'=1$. We can interpret them respectively as the scattering of $(1^\phi,2^\phi,3^\pi,4^\pi)$ and $(1^\phi,2^\pi,3^\pi,4^\phi)$. In the field theory limit, both of them are known to be equivalent to the pure pions scattering, which reads $s_{1 2}+s_{2 3}$. However, their higher $\alpha'$ order corrections are no longer equivalent. We remark that the dependency on the first derivative is the characteristic that distinguishes DRs of bosonic string from that of superstring.

To end this subsection, let us give a comparison between \myTheorem~\ref{thm: equivalent DR} and the IBP proof based on \myTheorem~\ref{thm: IBP building block equivalence}. For an integrated gauge invariant amplitude, the difference between DRs is some vanishing gauge terms, \textit{i.e.} Schwinger terms that does not contribute to the amplitude. While for the stringy disk integral, the difference between DRs are some vanishing boundary terms due to the short-distance behavior of the Koba-Nielsen factor.

\subsection{Closed super and bosonic strings}
There is a natural generalization of what we have considered, {\it i.e.} the closed super and bosonic string models. As studied in~\cite{Mizera:2019blq,Mizera:2019gea} and we have discussed, we have the following left (right) movers
\begin{gather}
    \varphi^\mathrm{gauge}_{\pm}=\eqref{eq: superstring correlator} \qquad \xrightarrow[\text{take}~e_i\cdot e_j]{\text{DR}} \qquad \varphi^\mathrm{scalar}_{\pm}=\mathrm{det}'\mathbf{\tilde A}\mspace{2mu},\\[5pt]
     \varphi^\mathrm{bosonic}_{\pm}=\eqref{def: bosonic string correlator}\qquad \xrightarrow[\text{take}~e_i\cdot e_j]{\text{DR}} \qquad \varphi^\mathrm{boscalsr}_{\pm}=\eqref{eq: boscalar}\mspace{2mu},\\[5pt]
    \varphi^\mathrm{color}_{\pm}= \operatorname{PT}(1,\dots,n)\mspace{2mu}.\phantom{\frac{1}{1}} 
\end{gather}

The closed string amplitudes are then given by the modulus squared integral
\begin{equation}
    \mathcal{M}_n^{\text {closed}}=\int_{\mathbb{C}^n}\frac{d^n z d^n \bar{z}}{\operatorname{vol~SL}(2, \mathbb{C})}\prod_{i<j}(\left|z_{i j}\right|^2)^{\alpha's_{i j}}  \varphi_{-,n}(z)\varphi_{+,n}(\bar{z})
\end{equation}

For different combinations of the left (right) movers, the corresponding amplitudes are listed in Table~\ref{tab: different combination of movers}, in which it is straightforward to see that the Born-Infeld (special Galileon) is given by dimensional reductions of Einstein gravity on its left (left and right) movers.

\begin{table}[htbp] 
\centering
\resizebox{\columnwidth}{!}{
\subfloat[theories in "gauge family"]{ \begin{tabular}{l|ccccc}
    & $\varphi_{+}^{\mathrm{gauge}}$ & $\varphi_{+}^{\mathrm{scalar}}$ \\ [5pt] \hline \\ [0.5pt]
    $\varphi_{-}^{\mathrm{gauge}}$ & Einstein gravity &  \\ [5pt]
    $\varphi_{-}^{\mathrm{scalar}}$ & Born-Infeld & special Galileon~\cite{Cachazo:2014xea} \\ [5pt]
    $\varphi_{-}^{\mathrm{color}}$ & Yang-Mills & NLSM \\ [5pt]
\end{tabular}}
\qquad
\subfloat[theories in "bosonic family"]{ \begin{tabular}{l|ccccc}
& $\varphi_{+}^{\mathrm{bosonic}}$ & $\varphi_{+}^{\mathrm{boscalar}}$  \\ [5pt] \hline \\ [0.5pt]
    $\varphi_{-}^{\mathrm{gauge}}$ & Weyl-Einstein gravity~\cite{Johansson:2017srf, Azevedo_2018}&  \\ [5pt]
    $\varphi_{-}^{\mathrm{boscalar}}$ & bos-BI & bos-sGal \\ [5pt]
    $\varphi_{-}^{\mathrm{color}}$ & YM+$(DF)^2$~\cite{Johansson:2017srf, Azevedo_2018}& bos-NLSM  \\ [5pt]
\end{tabular}}
}
\caption{Various theories from different combinations of left (right) movers, where bos-BI/sGal/ NLSM are the models containing building blocks from (DRs of) bosonic string ones.}
\label{tab: different combination of movers}
\end{table}

Noteworthy, the open and closed bi-adjoint scalar amplitudes are related by the single value projections~\cite{Stieberger:2013wea,Stieberger:2014hba,Brown:2018omk,Schlotterer:2018zce}. As a consequence, the ordered amplitudes of open and closed string are related via the single value projection since one can always express the $\varphi_{+}$ as logarithmic forms on the support of Integration-By-Part relations~\cite{He:2018pol,He:2019drm}, and can therefore expand it into the Parke-Taylor factor $\varphi^\mathrm{color}_{\pm}$.

\begin{table}[htbp]\centering
\begin{tabular}{|c|c|c|}
    \hline 
    \rule{0pt}{11pt} NLSM & open string & closed string \\ \hline
    \rule{0pt}{14pt} superstring & $\frac{-2\Gamma(1+s)\Gamma(1+t)}{\Gamma(-u)}$ & $\frac{-2\pi\Gamma(1+s)\Gamma(1+t)\Gamma(1+u)}{\Gamma(1-s)\Gamma(1-t)\Gamma(-u)}$ \\[4pt] \hline
   \rule{0pt}{14pt} bosonic string & $\frac{8 \Gamma \left(1+s\right) \Gamma \left(1+t\right)}{(1-s) \Gamma \left(-u\right)}$ & $\frac{8 \pi  \Gamma \left(1+s\right) \Gamma \left(1+t\right) \Gamma (1+u)}{(1-s) \Gamma (1-s) \Gamma (1-t) \Gamma \left(-u\right)}$ \\[4pt] \hline
\end{tabular}
\caption{The 4-point NLSM amplitudes in the open/closed bosonic and super string models, where we choose the first derivative $\partial_{e_1\cdot e_2}$ before the dimensional reductions.}
\label{tab: double-copy}
\end{table}

We present the explicit results of the 4-point stringy NLSM amplitudes in Table~\ref{tab: double-copy}. It is straightforward to show 
the corresponding open and closed string amplitudes are related by the Single-Valued (SV) map~\cite{Stieberger:2013wea,Stieberger:2014hba,Brown:2018omk,Schlotterer:2018zce}: we can write the stringy NLSM amplitudes from the DRs of the open and closed superstring ones as:
\begin{align}
    \mathcal{M}^{\text{NLSM}}_{\text{open}}(1,2,3,4)&=2u e^{o(s)+o(t)+o(u)-e(s)-e(t)+e(u)},\\[5pt]
    \mathcal{M}^{\text{NLSM}}_{\text{closed}}(1,2,3,4)&=2\pi u e^{2o(s)+2o(t)+2o(u)},
\end{align}
where $o(z)$ and $e(z)$ are summations of infinite series of $\zeta(\text{odd})$ and $\zeta(\text{even})$ respectively:
\begin{equation}
	o(z)=\sum_{k=1}^{\infty}\frac{\zeta(2k+1)z^{2k+1}}{2k+1}\qquad e(z)=\sum_{k=1}^{\infty}\frac{\zeta(2k)z^{2k}}{2k}.
\end{equation}
Note that the SV map of the multiple zeta values used here is simply $\zeta(2k+1)\to 2\zeta(2k+1),$ $\zeta(2k)\to 0$, therefore we have:
\begin{equation}
    \pi\left(\mathcal{M}^{\text{NLSM}}_{\text{open}}\right)_{\text{sv}} =\pi \mathcal{M}^{\text{NLSM}}_{\text{open}}\vert_{o(z)\to2o(z),e(z)\to0}=\mathcal{M}^{\text{NLSM}}_{\text{closed}}.
\end{equation}

Similar relations also hold for the NLSM amplitudes as the DRs of the bosonic open and closed string ones. Let us end this section by presenting the results of 4-point sGal and BI in Table~\ref{tab: KLT}, which is computed via the KLT double copy relations. 
\begin{table}[htbp]\centering
\begin{tabular}{|c|c|c|c|c|}
    \hline \rule{0pt}{11pt} Theory & sGal$\sim$NLSM$\otimes$NLSM & BI$\sim$NLSM$\otimes$YM \\ \hline
    \rule{0pt}{14pt} superstring & $\frac{4 \pi  \Gamma (1+s) \Gamma (1+t) \Gamma (1+u)}{\Gamma (-s) \Gamma (-t) \Gamma (-u)}$ & $\frac{16 \pi  \Gamma (1+s) \Gamma (1+t) \Gamma (1+u)}{\Gamma (1-s) \Gamma (1-t) \Gamma (1-u)} A^\text{BI}$ \\[4pt] \hline
    \rule{0pt}{14pt} bosonic string & $\frac{64 \pi  \Gamma (1+s) \Gamma (1+t) \Gamma (1+u)}{(1-s)^2\Gamma (-s) \Gamma (-t) \Gamma (-u)}$ & a complicated expression \\[4pt] \hline
\end{tabular}
\caption{The 4-point amplitudes of sGal and BI in the closed bosonic/super string models, where we choose the first derivative $\partial_{e_1\cdot e_2}$ before DRs and $A^\text{BI}$ is the 4-point field theory amplitude of BI.}
\label{tab: KLT}
\end{table}

\section{Conclusions and outlook}
In this paper we revisit the slogan ``pions as dimensional reduction of gluons''. We generalize the DRs found in the literature and prove that all dimensional reductions we introduce yield the same result for any gauge invariant object multi-linear in polarization vectors. Furthermore, we prove that the resulting function must have Adler's zero if the gauge invariant object only contains local simple poles. We present some explicit examples of open and closed super and bosonic string models, illustrating the equivalence among the DRs via the IBP relations. Our claims can also be applied to mixed amplitudes, for which we derive a closed formula for logarithmic correlators of any number of pions and $\phi^3$ scattering in the superstring induced model, providing a systematic way to compute such amplitudes at the $\alpha'$ expansion.


Our work has suggested several future directions. As we have seen, gauge invariance induces the Adler zero via the dimensional reduction. This connection is quite remarkable since the gauge invariance and Adler zero can be seen as the defining property for gauge theories and EFTs~\cite{Arkani-Hamed:2016rak,Rodina:2016jyz}. However, apart from our primary example, {\it i.e.} the Yang-Mills theory, there are other gauge invariant starting points such as Einstein-Maxwell-scalar and gravity, which reduce into Dirac-Born-Infeld and special Galileon that have enhanced Adler zero~\cite{Cheung:2014dqa,Cheung:2018oki}. The enhancement of the soft behaviour of these theories is still lack of a general understanding as the consequence of DRs. It would be desirable to find interpretations from a similar perspective. Moreover, it is well-known that the (enhanced) Adler zero is related to the shift symmetry, further understanding of the general DRs at the Lagrangian level as has been studied in~\cite{Cheung:2017yef} for a special choice of DR would be valuable.

Another direction to explore is to apply the general DRs we found here on the expansion of the YM amplitudes into mixed ampliudes of gluons and $\phi^3$~\cite{Lam:2016tlk,Fu:2017uzt,Du:2017kpo}, which would lead to various expansions of the NLSM as what has been done in~\cite{Dong:2021qai} using the special DR~\cite{Cheung:2017ems,Cheung:2017yef}.

On the other hand, it would be interesting to investigate the relations among the stringy NLSM models~\cite{Mizera:2019blq,Mizera:2019gea} we explored and those proposed in~\cite{Bianchi:2020cfc,Arkani-Hamed:2023swr,Arkani-Hamed:2024nhp,Carrasco:2016ldy,Carrasco:2016ygv}. There are significant and noteworthy differences between these models, for example, the models we study satisfy the monodromy relations~\cite{kawai1986relation,Stieberger:2009hq,Bjerrum-Bohr:2009ulz} while those in~\cite{Carrasco:2016ldy,Carrasco:2016ygv} satisfy the BCJ relations~\cite{Bern:2008qj,Bern:2010ue}.

Moreover, recent studies reveal a series of fantastic behaviors of the string/particle amplitudes, namely smooth splittings~\cite{Cachazo:2021wsz}, zeros and factorizations near the zeros~\cite{Arkani-Hamed:2023swr} (see also~\cite{Bartsch:2024amu, Li:2024qfp}). Later in~\cite{Cao:2024gln} it has been realized that another interesting phenomenon, referred to as the 2-splittings, provides a common origin for both the smooth splittings and the factorizations near zeros. It is natural to wonder whether the stringy models we study share these phenomena. 

Finally, we would like to generalize our understanding to loop amplitudes. At the 1-loop level, it is found in~\cite{Edison:2023ulf,Dong:2023stt} that the special DR~\cite{Cheung:2017ems,Cheung:2017yef} (but without taking out a special pair $e_i \cdot e_j$) acting on the scalar-loop YM integrand, produces the correct NLSM integrand at one loop. Given the fact the results in~\cite{Edison:2023ulf,Dong:2023stt} are based on the forward limit of tree amplitudes, it is evidenced that our general DRs~\eqref{eq: DRgeneral} would produce the same answer. It would be highly desirable to understand dimensional reductions in higher loop level, perhaps one of the available tools is the surfaceology that has been recently studied in~\cite{Arkani-Hamed:2023lbd,Arkani-Hamed:2023mvg,Arkani-Hamed:2023swr,Arkani-Hamed:2023jry,Arkani-Hamed:2024nhp,Arkani-Hamed:2024vna,Arkani-Hamed:2024yvu}.

\acknowledgments
We are very grateful to Song He for proposing this problem. It is our pleasure to thank Qu Cao and Song He for collaborations on the early stage of this project and stimulating discussions. We also thank Canxin Shi, Yao-Qi Zhang and Yong Zhang for discussions and collaborations on related projects.
This work is supported by the National Natural Science Foundation of China under Grant No. 12225510, 11935013, 12047503.

\newpage
\appendix


\section{Details for DRs of superstring theory}\label{app.: Supplement for superstring theory}
In this appendix we will give the proof of \myTheorem~\ref{thm: superstring linear expand}. We will set $(i_0,j_0)=(1,2)$ in~\eqref{eq: superstring correlator}, consistent with Section~\ref{sec: Pions in open superstrings}. Before 
performing dimensional reduction, let us introduce a key tool for handling the reduced Pfaffians in~\eqref{eq: superstring correlator}, namely the following Laplace expansion of the Pfaffian of a skew-symmetric $2n\times2n$ matrix $\bm{\mathcal{A}}\,$:
\begin{equation}\label{eq: expansion of Pf}
    \operatorname{Pf}(\bm{\mathcal{A}}\mspace{1mu}) =\sum_{j=1\mspace{1mu},\mspace{2mu}j\neq i}^{2n}(-1)^{i+j+1+\theta(i-j)}(\bm{\mathcal{A}}\mspace{1mu})_{ij}\operatorname{Pf}(\bm{\mathcal{A}}\mspace{1mu}_{ij})
\end{equation}
where $\theta(i-j)$ is the Heaviside step function, $(\bm{\mathcal{A}}\mspace{1mu})_{ij}$ denotes the matrix element in the $i$-th row and $j$-th column, and $\bm{\mathcal{A}}\mspace{1mu}_{ij}$ denotes $\bm{\mathcal{A}}$ with its $i$-th and $j$-th columns and rows removed, the index $i$ can be chosen arbitrarily.

Now we can apply~\eqref{eq: DRgeneral} to get the dimensional reduced Pfaffians and their prefactors. By further performing a series of elementary row and column transformations on $\mathbf{\Psi}$ without changing its Pfaffian, we obtain a new skew-matrix $\mathbf{\Phi}$ equivalent to $\mathbf{\Psi}$:
\begin{equation}
    \mathbf{\Psi} \xrightarrow{\,\eqref{eq: DRgeneral}\,} \mathbf{\Phi}=\begin{pmatrix}
        \mathbf{A}&\mathbf{D}\\-\mathbf{D}&-\mathbf{A}
    \end{pmatrix},\qquad
    (\mathbf{D})_{ij}:=\begin{dcases}
    V_i&\text{if}\quad i = j\in\text{II}\\ 
    0&\text{otherwise}
    \end{dcases},
\end{equation}

The explicit progress transforming $\mathbf{\Psi}$ to $\mathbf{\Phi}$ is given by:
\begin{equation}
\begin{aligned}
    \mathbf{\Psi}^{12}\xrightarrow{\,\eqref{eq: DRgeneral}\,} &\begin{pmatrix}
        \mathbf{A}_{\{\text{I},\text{I}\}}&\mathbf{A}_{\{\text{I},\text{II}\}}  &\mathbf{0}  &\mathbf{A}_{\{\text{I},\text{II}\}} \\ 
        \mathbf{A}_{\{\text{II},\text{I}\}} &\mathbf{A}_{\{\text{II},\text{II}\}}  &\mathbf{0}  &(\mathbf{A}+\mathbf{D})_{\{\text{II},\text{II}\}}\\
        \mathbf{0} &\mathbf{0}  &-\mathbf{A}_{\{\text{I},\text{I}\}}  &-\mathbf{A}_{\{\text{I},\text{II}\}} \\ 
        \mathbf{A}_{\{\text{II},\text{I}\}} &(\mathbf{A}-\mathbf{D})_{\{\text{II},\text{II}\}}  &-\mathbf{A}_{\{\text{II},\text{I}\}}  &\mathbf{0} 
    \end{pmatrix}\\[5pt]
    =&\begin{pmatrix}
        \mathbf{A}_{\{\text{I},\text{I}\}} &\mathbf{A}_{\{\text{I},\text{II}\}}  &\mathbf{0}  &\mathbf{A}_{\{\text{I},\text{II}\}} \\ 
        \mathbf{A}_{\{\text{II},\text{I}\}} &\mathbf{A}_{\{\text{II},\text{II}\}}  &\mathbf{0}  &(\mathbf{A}+\mathbf{D})_{\{\text{II},\text{II}\}}\\
        \mathbf{0} &\mathbf{0}  &-\mathbf{A}_{\{\text{I},\text{I}\}}  &-\mathbf{A}_{\{\text{I},\text{II}\}} \\ 
        \mathbf{0} &-\mathbf{D}_{\{\text{II},\text{II}\}}  &-\mathbf{A}_{\{\text{II},\text{I}\}}  &-(\mathbf{A}+\mathbf{D})_{\{\text{II},\text{II}\}}
    \end{pmatrix} \\[5pt]
    =&\begin{pmatrix}
        \mathbf{A}_{\{\text{I},\text{I}\}} &\mathbf{A}_{\{\text{I},\text{II}\}}  &\mathbf{0}  &\mathbf{0}\\ 
        \mathbf{A}_{\{\text{II},\text{I}\}} &\mathbf{A}_{\{\text{II},\text{II}\}}  &\mathbf{0}  &\mathbf{D}_{\{\text{II},\text{II}\}}\\
        \mathbf{0} &\mathbf{0}  &-\mathbf{A}_{\{\text{I},\text{I}\}}  &-\mathbf{A}_{\{\text{I},\text{II}\}} \\ 
        \mathbf{0} &-\mathbf{D}_{\{\text{II},\text{II}\}}  &-\mathbf{A}_{\{\text{II},\text{I}\}}  &-\mathbf{A}_{\{\text{II},\text{II}\}} 
    \end{pmatrix}=\mathbf{\Phi}^{12}\,.
\end{aligned}
\end{equation}

Note that $\text{I}\sqcup\text{II}=\{3,4,\dots,n\}$, the DR of prefactor is given by:
\begin{equation}
    (-\alpha')^{-q}\prod_{(ij)\in\rho}\frac{2 e_{i}\cdot e_{j}}{z_{i j}^2} \xrightarrow{\,\eqref{eq: DRgeneral}\,} (-1)^q \prod_{(ij)\in\rho}W_{ij}\times\begin{dcases}
    1 & \text{if}\  \rho\in(\mathfrak{S}_q|\rho\cap\text{I}\mspace{1mu})\\
    0 & \text{otherwise}
    \end{dcases}.
\end{equation}

Recall that we take derivative of $e_1\cdot e_2$ before DR, by using~\eqref{eq: expansion of Pf} we have:
\begin{equation}\label{eq: derivative of Pf Psi}
    \frac{\partial}{\partial e_1\cdot e_2}\operatorname{Pf}{\mathbf\Psi}_{12}^{\rho}=\frac{2}{z_{12}}\operatorname{Pf}{\mathbf\Psi}_{1,2,n+1,n+2}^{\rho}=\frac{2}{z_{12}}\operatorname{Pf}{\mathbf\Psi}^{(12)\mspace{1mu} \sqcup\mspace{1mu}\rho}.
\end{equation}

Now we combine the above results together to get:
\begin{equation}\label{eq: phi scalar to W}
\begin{aligned}
    \varphi^{\mathrm{scalar}}_{n} =\mathcal{DR}\left(\partial_{e_1\cdot e_2}\varphi_{n}^{\text{gauge}}\right) =\frac{2}{z_{12}^2}\sum_{q=0}^{\lfloor n/2\rfloor -1}\!(-1)^{q}\!\!\sum_{\rho\in(\mathfrak{S}_q|\rho\mspace{1mu}\cap\mspace{1mu}\text{I})}\prod_{(ij)\in\rho}W_{ij}\,\operatorname{Pf}{\mathbf\Phi}^{(12)\mspace{1mu} \sqcup\mspace{1mu}\rho}.
\end{aligned}
\end{equation}
where the somewhat self-referencing $\rho\in(\mathfrak{S}_q|\rho\mspace{1mu}\cap\mspace{1mu}\text{I})$ goes over all $\rho\in\mathfrak{S}_q$ such that each pair in $\rho$ has nonempty intersection with $\text{I}$, \textit{without} requiring that $\text{I}\subset\rho$. Later we will have this condition encoded in IBP building blocks to avoid this self-referencing notation.

The next key step of our proof is to expand $\operatorname{Pf}{\mathbf\Phi}^{(12)\mspace{1mu} \sqcup\mspace{1mu}\rho}$ into Taylor series of $V_i\,$s, where $\mathbf{\Phi}$ is regarded as function of $V_i\,$s through $(\mathbf{D})_{ij}$:
\begin{equation}\label{eq: Pf expansion in V}
    \operatorname{Pf}{\mathbf\Phi}^{(12)\mspace{1mu} \sqcup\mspace{1mu}\rho} =\sum_{s}\! \sum_{\{i_1,\ldots,i_s\}\subset \bar{\rho}\mspace{1mu}\cap\text{II}}\!\! V_{i_1}\mspace{-2mu}\cdots V_{i_s} \frac{\partial^r}{\partial V_{i_1}\cdots \partial V_{i_s}} \operatorname{Pf}{\mathbf\Phi}^{(12)\mspace{1mu} \sqcup\mspace{1mu}\rho} \left|_{V_k\rightarrow 0,\ \forall k\notin \{i_1,\ldots,i_s\}}\right. \,.
\end{equation}

Since $\operatorname{Pf}{\mathbf\Phi}^{(12)\mspace{1mu} \sqcup\mspace{1mu}\rho}$ contains at most linear terms with respect to each $V_i$, we omit all the higher-order terms in Taylor series. Let $\text{S}=\{i_1,\ldots,i_s\}$, by using~\eqref{eq: expansion of Pf} we can reduce~\eqref{eq: Pf expansion in V} into summation of $\text{det}\mspace{1mu}(\mathbf{A}_{(12)\mspace{1mu} \sqcup\mspace{1mu}\rho\mspace{1mu}\sqcup\mspace{1mu}\text{S}})$ over sets $\text{S}$:
\begin{equation}
    \operatorname{Pf}\mathbf{\Phi}^{(12)\mspace{1mu} \sqcup\mspace{1mu}\rho} =\sum_{\text{S}\subset\bar{\rho}\mspace{1mu}\cap\text{II}}(-1)^{\frac{(2n+3-\left |\text{S}\right |)\left |\text{S}\right |}{2}}(-1)^{\frac{n-|\rho|-2-|\text{S}|}{2}}\text{det}\mspace{1mu}(\mathbf{A}_{(12)\mspace{1mu} \sqcup\mspace{1mu}\rho\mspace{1mu}\sqcup\mspace{1mu}\text{S}}) \prod_{i\in\text{S}}V_{i}
\end{equation}

Note that the determinant of an odd-dimensional skew-matrix is zero, we further obtain:
\begin{equation}\label{eq: expand Pf phi^rd}
    \operatorname{Pf}\mathbf{\Phi}^{(12)\mspace{1mu} \sqcup\mspace{1mu}\rho}= (-1)^{\lfloor\frac{n-|\rho|-2}{2}\rfloor} \sum_{\substack{\text{S}\subset\bar{\rho}\mspace{1mu}\cap\text{II}\\ |\text{S}|+n=\text{even}}} \prod_{i\in\text{S}}V_{i}\; \text{det}\mspace{1mu}(\mathbf{A}_{(12)\mspace{1mu} \sqcup\mspace{1mu}\rho\mspace{1mu}\sqcup\mspace{1mu}\text{S}})
\end{equation}

Finally we collect~\eqref{eq: phi scalar to W} and~\eqref{eq: expand Pf phi^rd} together to get:
\begin{equation}\label{eq: phi to W V detA}
    \varphi^{\mathrm{scalar}}_{n} =\frac{2(-1)^{\lfloor\frac{n}{2}\rfloor-1}}{z_{12}^2} \sum_{q=0}^{\lfloor n/2\rfloor -1}\!\!\!\sum_{\substack{\rho\in(\mathfrak{S}_q|\rho\mspace{1mu}\cap\mspace{1mu}\text{I})\\[1pt] \text{S}\subset\bar{\rho}\mspace{1mu}\cap\text{II}\mspace{1mu},\mspace{1mu} |\text{S}|+n=\text{even}}}\!\! \prod_{(ij)\in\rho}W_{ij}\,\prod_{i\in\bar{\rho}}V_{i}\; \text{det}\mspace{1mu}(\mathbf{A}_{(12)\mspace{1mu} \sqcup\mspace{1mu}\rho\mspace{1mu}\sqcup\mspace{1mu}\text{S}}).
\end{equation}
where $\bar{\rho}$ is the complement of $\rho$ with respect to $\rho\mspace{1mu}\sqcup\mspace{1mu}\text{S}$. Let $\text{T}=\rho\mspace{1mu}\sqcup\mspace{1mu}\text{S}$, we have $\rho=\mathrm{T}\cap\mathrm{I}\,$, $\text{S}=\mathrm{T}\cap\mathrm{II}$. Note that now $\text{T}$ goes over all the subsets of $\text{I}\sqcup\text{II}$ with $\left |\mathrm{T}\right |+n=\mathrm{even}$, we can rewrite~\eqref{eq: phi to W V detA} as summation of IBP building blocks~\eqref{def: IBP building blocks} over all possible $\text{T}$ as \textit{sets}, and get rid of the somewhat annoying self-referencing notation:
\begin{equation}
    \varphi^{\mathrm{scalar}}_{n} =\frac{2(-1)^{\lfloor\frac{n}{2}\rfloor-1}}{z_{12}^2}\sum_{\substack{\mathrm{T}\subset\{3,\ldots,n\}\\ |\mathrm{T}|+n=\mathrm{even}}}\mathcal{I}(\mathrm{T}\cap\mathrm{I}\mspace{1mu}|\mspace{1mu}\mathrm{T}\cap\mathrm{II})\,\mathrm{det}\mspace{1mu}(\mathbf{A}_{(12)\mspace{1mu}\sqcup\mspace{1mu}\mathrm{T}}).
\end{equation}
which is exactly \myTheorem~\ref{thm: superstring linear expand} we expect. This completes the proof.

\section{Details for deriving the logarithmic form~\eqtitleref{eq: BCJnumerator} and explicit result}\label{app.: Supplement for mixed amplitude}
In this appendix we will provide the proofs of~\eqref{eq: gauge+color correlator} and~\eqref{eq: pion+color correlator}. Throughout this appendix we set $(i_0,j_0)=(1,n)$ in~\eqref{eq: superstring correlator} to be consistent with conventions in Section~\ref{sec: Mixed amplitudes and logarithmic forms} and~\cite{Cheung:2017ems}.
\subsection{Proof of~\eqtitleref{eq: gauge+color correlator}}
Recall that the definition of superstring YMS integrand reads:
\begin{equation*}
    \varphi^{\text{gauge}+\text{color}}_{n}(\{i_1,i_2,\ldots,i_m\}|1,\alpha,n)=\partial_{\mspace{2mu}2 e_{1} \cdot e_{n}} \prod_{i=1}^{|\alpha|} \partial_{\mspace{2mu}2 e_{\alpha_i} \cdot (p_{\alpha_{i-1}} - p_{n}) } \varphi^{\text{gauge}}_{n}, \tag{\ref{eq: gauge+color definition}}
\end{equation*}

In order to obtain an explicit formula of superstring YMS integrand from~\eqref{eq: gauge+color definition}, let us first take the derivative of $e_1\cdot e_n$, by using~\eqref{eq: expansion of Pf} we have:
\begin{equation}\label{eq: par e1.en of phi gauge}
    \partial_{\mspace{2mu}{2e_1\cdot e_n}}\varphi_{n}^{\text{gauge}}=\frac{1}{z_{1n}}\sum_{q=0}^{\lfloor n/2\rfloor -1}(-\alpha')^{-q}\sum_{\rho\in\mathfrak{S}_q}\prod_{(ij)\in\rho}\frac{2 e_{i}\cdot e_{j}}{z_{i j}^2}\;\operatorname{Pf}\mathbf{\Psi}^{1,\rho,n}
\end{equation}

Note that for any $\rho$ such that $\rho\cap\alpha\neq\varnothing$, the corresponding term in~\eqref{eq: par e1.en of phi gauge} does not contribute to the final result, since $\partial_{\mspace{2mu}\raisemath{3pt}{2e_{\alpha_i} \cdot (p_{\alpha_{i-1}} - p_{n})}}$ yields zero for both $\prod_{(ij)\in\rho} 2e_i\cdot e_j$ and $\operatorname{Pf}\mathbf{\Psi}^{1,\rho,n}$ for any $\alpha_i\in\rho\cap\alpha$. Thus we can safely restrict the range of gauge particles (where elements of $\rho\in\mathfrak{S}_q$ is taken) from $\{2,\dots,n-1\}$ to $\{i_1,\dots,i_m\}$. In the following text, elements of partitions in $\mathfrak{S}_q$ is always taken from $\{i_1,\dots,i_m\}$.


The next task is to evaluate the derivatives $\partial_{\mspace{2mu}\raisemath{3pt}{2e_{\alpha_i} \cdot\, (p_{\alpha_{i-1}}-p_n)}}$. Since $\alpha_0=1\in(1,\rho,n)$, the only component of $\mathbf{\Psi}^{1,\rho,n}$ containing $e_{\alpha_1}\!\cdot p_n$ or $e_{\alpha_1}\!\cdot p_{\alpha_0}$ is $(\mathbf{C})_{\alpha_{1}\alpha_{1}}$, thus the derivative $\partial_{\mspace{2mu}\raisemath{3pt}{2e_{\alpha_{1}} \cdot\, (p_{\alpha_0}-p_n)}}$ is equivalent to $(\partial_{\mspace{2mu}\raisemath{3pt}{2e_{\alpha_{1}} \cdot\, (p_{\alpha_0}-p_n)}} (\mathbf{C})_{\alpha_{1}\alpha_{1}})\,\partial_{\raisemath{2pt}{(\mathbf{C})_{\alpha_{1}\alpha_{1}}}}$ when acting on $\operatorname{Pf}\mathbf{\Psi}^{1,\rho,n}$. This inspire us to translate all the derivatives $\partial_{\mspace{2mu}\raisemath{3pt}{2e_{\alpha_i} \cdot\, (p_{\alpha_{i-1}}-p_n)}}$ into $\partial_{\raisemath{2pt}{(\mathbf{C})_{\alpha_{i}\alpha_{i}}}}$, which reads:
\begin{equation}\label{eq: gauge+color translation}
    \prod_{i=1}^{|\alpha|} \partial_{\mspace{2mu} 2e_{\alpha_i} \cdot (p_{\alpha_{i-1}}-p_n)} \cong \Bigg(\prod_{i=1}^{|\alpha|} \partial_{\mspace{2mu} 2e_{\alpha_i} \cdot (p_{\alpha_{i-1}}-p_n)} (\mathbf{C})_{\alpha_{i}\alpha_{i}}\Bigg)\prod_{i=1}^{|\alpha|} \partial_{(\mathbf{C})_{\alpha_{i}\alpha_{i}}}
\end{equation}


This translation holds if the input of $\partial_{\mspace{2mu}\raisemath{3pt}{2e_{\alpha_i} \cdot\, (p_{\alpha_{i-1}}-p_n)}}$ depends on $e_{\alpha_i}\!\cdot p_n$ and $e_{\alpha_i}\!\cdot p_{\alpha_{i-1}}$ solely through $(\mathbf{C})_{\alpha_{i}\alpha_{i}}$, and it is the case. By recursively using~\eqref{eq: expansion of Pf} we have:
\begin{equation}
    \prod_{i=1}^{k} \partial_{(\mathbf{C})_{\alpha_{i}\alpha_{i}}}\operatorname{Pf}\mathbf{\Psi}^{1,\rho,n}=(-1)^{\frac{k(2n-3-k)}{2}}\operatorname{Pf}\mathbf{\Psi}^{1,\alpha_1,\dots,\alpha_k,\rho,n},\quad \forall k\leqslant |\alpha|.
\end{equation}

Thus the input of $\partial_{\mspace{2mu}\raisemath{3pt}{2e_{\alpha_i} \cdot\, (p_{\alpha_{i-1}}-p_n)}}$ is just $\operatorname{Pf}\mathbf{\Psi}^{1,\alpha_1,\dots,\alpha_{i-1},\rho,n}$ (up to an overall sign and a prefactor), whose only component containing $e_{\alpha_i} \!\cdot p_n$ or $e_{\alpha_i}\!\cdot p_{\alpha_{i-1}}$ is $(\mathbf{C})_{\alpha_{i}\alpha_{i}}$. This proves the translation~\eqref{eq: gauge+color translation}. The prefactor is easy to get:
\begin{equation}\label{eq: gauge+color prefactor}
    \prod_{i=1}^{|\alpha|} \partial_{\mspace{2mu}2e_{\alpha_i} \cdot (p_{\alpha_{i-1}}-p_n)} (\mathbf{C})_{\alpha_{i}\alpha_{i}}=\prod_{i=1}^{|\alpha|} \frac{z_{\alpha_{i-1} n}}{z_{\alpha_{i-1} \alpha_{i}} z_{\alpha_{i} n}}=\frac{\operatorname{PT}(1,\alpha,n)}{\operatorname{PT}(1,n)}
\end{equation}


Finally we collect~\eqref{eq: gauge+color translation} to~\eqref{eq: gauge+color prefactor} together, and take $k=|\alpha|$ to get:
\begin{equation}
\begin{aligned}
    \varphi^{\text{gauge}+\text{color}}_{n} & (\{i_1,\ldots,i_m\} |1,\alpha,n)=(-1)^{n+1+\frac{|\alpha|(2n-3-|\alpha|)}{2}}\times\\ \times\,&\mathrm{PT}(1,\alpha,n)\sum_{q=0}^{\lfloor m/2 \rfloor} (-\alpha')^{-q} \sum_{\rho\in\mathfrak{S}_q} \prod_{(ij)\in\rho} \frac{2 e_{i}\mspace{-1mu}\cdot\mspace{-1mu} e_{j}}{z_{i j}^2}\,\operatorname{Pf}\mathbf{\Psi}^{1,\alpha,\rho,n}.
\end{aligned}
\end{equation}

Neglecting the unimportant overall sign, this is exactly~\eqref{eq: gauge+color correlator} we desire.

\subsection{Proof of~\eqtitleref{eq: pion+color correlator}}
In this subsection, we fix $z_n\to \infty$, and denote the identities that only hold after omitting $z_{in}\to \infty$ by ``$\doteq$''. These $z_{in}$ can be safely omitted since they automatically cancel out with the $z_{jn}\to \infty$ in $d\mu_n^\text{string}$. In order to prove~\eqref{eq: pion+color correlator}, let us first expand~\eqref{eq: mixed amplitude in det(A) PT} into summation of labelled trees for the case that $\alpha=\varnothing$ with the matrix tree theorem~\cite{Feng:2012sy}, then reduce to general $\alpha$ with Laplace expansion of determinant.

For $\alpha=\varnothing$, the matrix tree theorem can be directly applied to yield the result:
\begin{equation}\label{eq: matrix tree for pure pion}
	\operatorname{det}\mathbf{\tilde A}_{\{2,3,\dots,n-1\}}\operatorname{PT}(1,n) \doteq (-1)^n\sum_{G(1,2,\dots,n-1)} \prod_{e(i,j)} \frac{s_{i j}}{z_{i j}},
\end{equation}
where we make the denominator of each term matching the form $z_{2,\bullet}z_{3,\bullet}\cdots z_{n-1,\bullet}$ to get the correct relative signs. For instance:
\begin{equation}
	\operatorname{det}\mathbf{\tilde A}_{\{2,3\}}\operatorname{PT}(1,4) \doteq \frac{s_{21} s_{31}}{z_{21} z_{31}}+\frac{s_{23} s_{31}}{z_{23} z_{31}}+\frac{s_{21} s_{32}}{z_{21} z_{32}}.
\end{equation}

By treating $s_{ij}|_{1\leqslant i<j\leqslant n}$ as independent variables, similar arguments as the proof of~\eqref{eq: gauge+color correlator} yields the following equivalence for~\eqref{eq: matrix tree for pure pion}:
\begin{equation}\label{eq: matrix tree translation}
	\prod_{i=1}^{|\alpha|} \partial_{s_{\alpha_{i-1} \alpha_{i}}} \cong \Bigg(\prod_{i=1}^{|\alpha|} \partial_{s_{\alpha_{i-1} \alpha_{i}}}(\mathbf{\tilde A})_{\alpha_{i}\alpha_{i}}\Bigg)\prod_{i=1}^{|\alpha|} \partial_{(\mathbf{\tilde A})_{\alpha_{i}\alpha_{i}}},
\end{equation}

Then we can apply Laplace expansion to recursively remove pions from $\operatorname{det}\mathbf{\tilde A}_{\{2,3,\dots,n-1\}}$:
\begin{equation}
    \partial_{(\mathbf{\tilde A})_{\alpha_{i}\alpha_{i}}}\operatorname{det}\mathbf{\tilde A}_{\{\ldots,\alpha_{i-1},\alpha_{i},\alpha_{i+1},\ldots \}} =\operatorname{det}\mathbf{\tilde A}_{\{\ldots,\alpha_{i-1},\alpha_{i+1},\ldots \}}.
\end{equation}

And the prefactor evaluates to:
\begin{equation}\label{eq: matrix tree prefactor}
    \prod_{i=1}^{|\alpha|} \partial_{s_{\alpha_{i-1} \alpha_{i}}} (\mathbf{\tilde A})_{\alpha_{i}\alpha_{i}}=\prod_{i=1}^{|\alpha|} \frac{1}{z_{\alpha_{i-1} \alpha_{i}}} \doteq \operatorname{PT}(1,\alpha,n).
\end{equation}

Finally, we collect~\eqref{eq: matrix tree translation} to~\eqref{eq: matrix tree prefactor} together to get:
\begin{equation}
	 \operatorname{det}\mathbf{\tilde A}_{\{i_1,i_2,\ldots,i_m \}}\operatorname{PT}(1,\alpha,n)=\prod_{i=1}^{|\alpha|}\partial_{s_{\alpha_{i-1}\alpha_{i}}}\operatorname{det}\mathbf{\tilde A}_{\{2,3,\dots,n-1\}}.
\end{equation}

The RHS selects all the trees containing sub-tree $(1,\alpha)$ divided by $\prod_{i=1}^{|\alpha|} s_{\alpha_{i-1} \alpha_{i}}$:
\begin{equation}
    \operatorname{det}\mathbf{\tilde A}_{\{i_1,i_2,\ldots,i_m \}}\operatorname{PT}(1,\alpha,n) \doteq (-1)^{n}\sum_{G(1,\alpha)} \prod_{e(i,j)} \frac{s_{i j}}{z_{i j}} \prod_{k=1}^{|\alpha|} \frac{1}{s_{\alpha_{k-1},\alpha_{k}} }.
\end{equation}

This is exactly~\eqref{eq: pion+color correlator}, hence completes the proof.

\subsection{Explicit example at 9 points}
The field theory 9-point 6 pions amplitude can be computed via similar formula as~\eqref{eq: BCJnumerator}, where we need to take {\it e.g.} $(i_0,j_0)=(3,9)$ to be special instead of $(1,9)$, then the result is obtained by plugging in the bi-adjoint $\phi^3$ amplitudes:
\begin{equation*}
\resizebox{\textwidth}{!}{$\begin{aligned}
&A_{9}^{\mathrm{NLSM}+\phi ^3}(\{1,2,4,5,7,8\}|3,6,9)=-\frac{5}{3}+\frac{2\left( X_{1,3}+X_{2,4} \right)}{X_{1,4}}+\frac{2\left( X_{1,8}+X_{2,9} \right)}{X_{2,8}}+\frac{2\left( X_{1,3}+X_{2,9} \right)}{X_{3,9}}\\
&+\frac{\left( X_{1,3}+X_{2,4} \right) \left( X_{4,6}+X_{5,7} \right) \left( X_{1,8}+X_{7,9} \right)}{3X_{1,4}X_{1,7}X_{4,7}}+\frac{\left( X_{1,8}+X_{2,9} \right) \left( X_{2,4}+X_{3,5} \right) \left( X_{5,7}+X_{6,8} \right)}{3X_{2,5}X_{2,8}X_{5,8}}\\
&+\frac{\left( X_{1,3}+X_{2,9} \right) \left( X_{3,5}+X_{4,6} \right) \left( X_{6,8}+X_{7,9} \right)}{3X_{3,6}X_{3,9}X_{6,9}}+\frac{\left( X_{1,8}+X_{2,9} \right) \left( X_{2,7}+X_{3,8} \right) \left( X_{3,5}+X_{4,6} \right)}{X_{2,8}X_{3,6}X_{3,7}}\\
&+\frac{\left( X_{1,8}+X_{2,9} \right) \left( X_{2,7}+X_{3,8} \right) \left( X_{4,6}+X_{5,7} \right)}{X_{2,8}X_{3,7}X_{4,7}}+\frac{\left( X_{1,3}+X_{2,4} \right) \left( X_{1,5}+X_{4,6} \right) \left( X_{1,8}+X_{7,9} \right)}{X_{1,4}X_{1,6}X_{1,7}}\\
&+\frac{\left( X_{1,3}+X_{2,6} \right) \left( X_{3,5}+X_{4,6} \right) \left( X_{1,8}+X_{7,9} \right)}{X_{1,6}X_{1,7}X_{3,6}}+\frac{\left( X_{1,3}+X_{2,7} \right) \left( X_{3,5}+X_{4,6} \right) \left( X_{1,8}+X_{7,9} \right)}{X_{1,7}X_{3,6}X_{3,7}}\\
&+\frac{\left( X_{1,3}+X_{2,9} \right) \left( X_{3,5}+X_{4,6} \right) \left( X_{3,8}+X_{7,9} \right)}{X_{3,6}X_{3,7}X_{3,9}}+\frac{X_{1,3}+X_{1,5}+X_{2,4}+X_{2,6}+X_{3,5}+X_{4,6}}{X_{1,6}}\\
&-\frac{\left( X_{1,5}+X_{2,6} \right) \left( X_{2,4}+X_{3,5} \right)}{X_{1,6}X_{2,5}}-\frac{\left( X_{1,8}+X_{2,9} \right) \left( X_{2,4}+X_{3,5} \right)}{X_{2,5}X_{2,8}}-\frac{\left( X_{1,3}+X_{2,4} \right) \left( X_{1,5}+X_{4,6} \right)}{X_{1,4}X_{1,6}}\\
&-\frac{\left( X_{1,3}+X_{2,6} \right) \left( X_{3,5}+X_{4,6} \right)}{X_{1,6}X_{3,6}}-\frac{\left( X_{1,8}+X_{2,9} \right) \left( X_{3,5}+X_{4,6} \right)}{X_{2,8}X_{3,6}}-\frac{\left( X_{1,3}+X_{2,9} \right) \left( X_{3,5}+X_{4,6} \right)}{X_{3,6}X_{3,9}}\\
&-\frac{\left( X_{1,3}+X_{2,4} \right) \left( X_{4,6}+X_{5,7} \right)}{X_{1,4}X_{4,7}}-\frac{\left( X_{1,3}+X_{2,9} \right) \left( X_{4,6}+X_{5,7} \right)}{X_{3,9}X_{4,7}}-\frac{\left( X_{1,8}+X_{2,9} \right) \left( X_{4,6}+X_{5,7} \right)}{X_{2,8}X_{4,7}}\\
&-\frac{\left( X_{1,3}+X_{1,5}+X_{2,4}+X_{2,6}+X_{3,5}+X_{4,6} \right) \left( X_{1,8}+X_{7,9} \right)}{X_{1,6}X_{1,7}}\\
&-\frac{\left( X_{3,5}+X_{4,6} \right) \left( X_{1,3}+X_{1,8}+X_{2,7}+X_{2,9}+X_{3,8}+X_{7,9} \right)}{X_{3,6}X_{3,7}}+(\mathrm{cyclic},i\rightarrow i+3,i+6).
\end{aligned}$}
\end{equation*}

\bibliographystyle{JHEP}\bibliography{Refs}

\end{document}